\documentclass[10pt,twocolumn,SingleSpace]{IEEEtran}
\usepackage{setspace}

\usepackage{amsmath}
\usepackage{amsthm}
\usepackage{pxfonts}
\usepackage{cite}
\usepackage{algorithmic}
\usepackage{algorithm}
\usepackage{epsfig}
\usepackage{color}
\usepackage{enumerate}
\usepackage{subfigure}
\usepackage{graphicx}
\graphicspath{{figures/}}

\usepackage{cite, citesort}
\usepackage{times}
\usepackage{multirow}

\theoremstyle{plain}

\newtheorem{theorem}{Theorem}[section]
\newtheorem{problem}{Problem}

\newtheorem{lemma}[theorem]{Lemma}
\newtheorem{fact}{Fact}

\begin{document}

\title{Merge Frame Design for Video Stream Switching \\
using Piecewise Constant Functions}

\author{Wei Dai~\IEEEmembership{Student Member,~IEEE},
Gene Cheung~\IEEEmembership{Senior Member,~IEEE},
Ngai-Man Cheung~\IEEEmembership{Senior Member,~IEEE},\\
Antonio Ortega~\IEEEmembership{Fellow,~IEEE},
Oscar C. Au~\IEEEmembership{Fellow,~IEEE}
\begin{small}
\thanks{W. Dai is with Hong Kong University of Science and Technology, Clear Water Bay, Kowloon, Hong Kong. Email: weidai@connect.ust.hk}
\thanks{G. Cheung is with National Institute of Informatics, 2-1-2, Hitotsubashi, Chiyoda-ku, Tokyo, 101-8430, Japan. Email: cheung@nii.ac.jp}
\thanks{N.-M. Cheung is with Singapore University of Technology and Design, 8 Somapah Road, Singapore 487372. Email: ngaiman\_cheung@sutd.edu.sg}
\thanks{A. Ortega is with University of Southern California, 3740 McClintock Ave., Los Angeles, CA 90089-2564. Email: ortega@sipi.usc.edu}
\end{small}
}
\maketitle

\markboth{IEEE Transactions on Image Processing, September~2015}%
{Dai \MakeLowercase{\textit{et al.}}: Merge Frame Design for Video Stream Switching using Piecewise Constant Functions}



\begin{abstract}
  The ability to efficiently switch from one pre-encoded video stream
  to another (e.g., for bitrate adaptation or view switching) is
  important for many interactive streaming applications. Recently,
  stream-switching mechanisms based on distributed source coding (DSC)
  have been proposed. In order to reduce the overall transmission
  rate, these approaches provide a ``merge'' mechanism, where
  information is sent to the decoder such that the exact same frame can 
  be reconstructed given that any one of a known set of
  side information (SI) frames is available at the decoder (e.g., each
  SI frame may correspond to a different stream from which we are
  switching). However, the use of bit-plane coding and channel coding
  in many DSC approaches leads to complex coding and decoding.  In
  this paper, we propose an alternative approach for merging multiple
  SI frames, using a piecewise constant (PWC) function as the merge
  operator. In our approach, for each block to be reconstructed, a
  series of parameters of these PWC merge functions are transmitted in
  order to guarantee identical reconstruction given the known side
  information blocks. We consider two different scenarios. In the
  first case, a target frame is first given, and then
  merge parameters are chosen so that this frame can be reconstructed
  exactly at the decoder. In contrast, in the second scenario, the   
  reconstructed frame and merge parameters are jointly optimized to meet a
  rate-distortion criteria. Experiments show that for both scenarios,
  our proposed merge techniques can outperform both a recent approach
  based on DSC and the SP-frame approach in H.264, in terms of
  compression efficiency and decoder complexity.
\end{abstract}


\IEEEpeerreviewmaketitle

\section{Introduction}
\label{sec:intro}

In conventional \textit{non-interactive} video streaming, a client plays back 
successive frames in a pre-encoded stream in a fixed order. In contrast, in \textit{interactive} video streaming~\cite{cheung10vcip}, a client can 
switch freely in real-time among a number of pre-encoded streams. 
Examples include switching among multiple streams representing the same video encoded at different bit-rates for real-time bandwidth adaptation~\cite{wu08}, or switching among views in a multi-view video~\cite{cheung11tip}. 
See \cite{cheung10vcip} for more examples of interactive streaming.
A major challenge in interactive video streaming is to achieve efficient real-time switching among pre-encoded video streams. A simple approach would be to insert an intra-coded I-frame at each potential switching point~\cite{cheung09pv}. But the relatively high rate required for I-frames often makes it impractical to insert them frequently in the streams, thus reducing the interactivity of playback.   

\begin{figure}[ht]
\centering
\includegraphics[width = 0.4\textwidth]{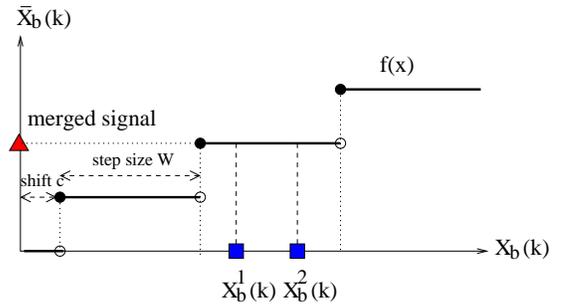}
\vspace{-0.1in}
\caption{Given the $k$-th coefficient $X_b(k)$ in block $b$ from either SI frame 1 or 2, a piecewise constant function $f(x)$ maps either one ($X^1_b(k)$ or $X^2_b(k)$) to the same $\bar{X}_b(k)$ if they fall on the same constant interval.}
\label{fig:pwc}
\end{figure}

Towards a more efficient stream-switching mechanism, \textit{distributed source coding} (DSC) has been proposed. DSC can in principle achieve compression efficiency that is a function of the worst-case correlation between the target frame and the \textit{side information} (SI) frames (from which the client may be switching) \cite{nmcheung06,mcheung08vcip,mcheung09pcs}. As an example, illustrated by Fig.\;\ref{fig:pwc}, in the block-based DCT approach of \cite{mcheung09pcs}, a desired $k$-th 
quantized frequency coefficient value $\bar{X}_b(k)$ in block $b$ of the target frame is reconstructed using either $X^1_b(k)$ or $X^2_b(k)$, the corresponding coefficients in SI frames $1$ and $2$, respectively. A \textit{D-frame} is transmitted so that it is possible to reconstruct the exact same target frame given any one of the two SI frames \cite{mcheung09pcs}. Thus we say that the D-frame supports a \textit{merge} operation. 
In particular, the least significant bits (LSBs) of $X^1_b(k)$ and $X^2_b(k)$ are treated as ``noisy'' versions of the LSBs of $\bar{X}_b(k)$. The most significant bits (MSBs) of $\bar{X}_b(k)$ are obtained from the MSBs of $X^1_b(k)$ or  $X^2_b(k)$, which are identical, while the D-frame contains channel codes that can produce the actual LSBs of $\bar{X}_b(k)$ taking $X^1_b(k)$ or  $X^2_b(k)$ as inputs. 
The channel codes associated to these target frame coefficients compose the D-frames, which potentially require significantly fewer bits than an I-frame representation of the target frame~\cite{mcheung09pcs}.
%
%

There remain significant hurdles towards practical implementation of D-frames, however. First, the use of bit-plane encoding and channel codes in proposed techniques~\cite{mcheung09pcs}  means that the computation complexity at the decoder is high. Second, because the average statistics of a transform coefficient bit-plane for the entire image are used, non-stationary noise statistics can lead to high rate channel codes, resulting in coding inefficiency.

In this paper, we propose to use a \textit{piecewise constant} (PWC) function\footnote{An earlier version of this paper was presented at ICIP 2013~\cite{dai13}.} as the signal merging operator. This approach operates directly on quantized frequency coefficients (instead of using a bit-plane representation) and does not require channel codes. As will be discussed in more detail in Section~\ref{subsec:coset}, our signal merging approach can be interpreted as a generalization of \textit{coset coding}~\cite{pradhan:03}, where we explicitly optimize the merged target values for improved rate-distortion (RD) performance. The basic idea of our approach is  
summarized in Fig.\;\ref{fig:pwc}, which depicts a \texttt{floor} function characterized by two parameters: a step size $W$ and a shift $c$. 
In our approach, the encoder selects $W$ and $c$ to guarantee that $X^1_b(k)$ and $X^2_b(k)$ are in the same interval and thus map to the same reconstruction value. A $W$ will be chosen for each frequency $k$, based on the statistics of the various $X_b(k)$ across all blocks $b$. Then, given $W$ it will be possible to adjust $c$ so that the reconstructed value matches a desired target, $\bar{X}_b(k)$. A value of $c$ will be chosen for each $k$ and $b$, so that the bitrate required by our proposed technique is dominated by the cost of transmitting $c$. In this paper, we will formulate the problem of selecting $c$ and $W$, and develop techniques for RD optimization of this selection.  

We consider two scenarios. In the first one, \textit{fixed target merging}, we will assume that $\bar{X}_b(k)$ has been given, \textit{e.g.}, by first generating an intra-coded version of the target frame, and using the corresponding quantized coefficient values as
targets. We will show how to choose $W$ to guarantee that $\bar{X}_b(k)$ can be reconstructed. We will also show that given $W$, $c$ is fixed. 
This type of merging is useful when there are cycles in the interactive playback, \textit{i.e.}, frame $A$ is an SI frame for frame $B$ {\em and} $B$ is an SI frame for $A$. 
This will be the case in \textit{static view switching} for multiview video streaming, to be discussed in Section~\ref{sec:system}. 

In the second scenario, \textit{optimized target merging}, we select $W$, $c$ {\em and} $\bar{X}_b(k)$ based on an RD criteria, where distortion is computed with respect to a desired target $X^0_b(k)$. In this scenario, we can use smaller values for $W$, and no longer need to select a fixed $c$ for a given $W$ and $\bar{X}_b(k)$. This allows us to optimize $c$ so as to significantly reduce the rate needed to encode the merging information. This approach can be used when there are no cycles in the interactive playback, \textit{e.g.}, in \textit{dynamic view switching} scenarios (also discussed in Section \ref{sec:system}). Experimental results show significant compression gains over D-frames \cite{mcheung09pcs} and SP-frames in H.264 \cite{karczewicz03} at reduced decoder computation complexity.

The paper is organized as follows. We first summarize related work in Section~\ref{sec:related}. We then provide an overview of our coding system in Section~\ref{sec:system}. We discuss the use of PWC functions for signal merging in Section~\ref{sec:formulation}. We present our PWC function parameter selection methods for fixed target merging and optimized target merging in Section~\ref{sec:target} and \ref{sec:Solving}, respectively. Finally, we present experimental results and conclusions in Section~\ref{sec:results} and \ref{sec:conclude}, respectively.

\section{Related Work}
\label{sec:related}
The H.264 video coding standard \cite{wiegand03} introduced the concept of \textit{SP-frames}~\cite{karczewicz03} for stream-switching. In a nutshell, first the difference between one SI frame and the target picture is \textit{lossily} coded as the primary SP-frame. Then, the difference between each additional SI frame and the reconstructed primary SP-frame is \textit{losslessly} coded as a secondary SP-frame; lossless coding ensures identical reconstruction between primary and each of the secondary SP-frames. One drawback of SP-frames is coding inefficiency.  Due to lossless coding in secondary SP-frames, their sizes can be significantly larger than conventional P-frames.  Furthermore, the number of secondary SP-frames required is equal to the number of SI frames, thus resulting in significant storage costs. As we will discuss, our proposed scheme encodes only one merge frame for all SI frames, and hence the storage requirement is lower than for SP-frames. 

While DSC has been proposed for designing interactive and stream-switching mechanisms in the past decade~\cite{wu08,aaron04,nmcheung06,mcheung08vcip,mcheung09pcs}, partly due to the computation complexity required for bit-plane and channel coding in common DSC implementations, DSC is not widely used nor adopted into any video coding standards. In contrast, in this work, our proposed coding tool involves only quantization (PWC function) and entropy coding of function parameters, both of which are computationally simple. Further, we demonstrate coding gain over a previously proposed DSC-based approach \cite{mcheung09pcs} in Section~\ref{sec:results}.


One of the primary applications of our proposed merge frame is interactive media systems, which have attracted considerable interest~\cite{cheungCheung:13}. In particular, a range of media data types have been considered for interactive applications in the past: images~\cite{taubman:03}, light-fields~\cite{steinbach:08a,steinbach:08b}, volumetric images~\cite{ortega:09}, videos~\cite{wee:99,lin:01,fu:06,nmcheung06,mcheung08vcip,devaux:07,taubman:11} and high-resolution videos~\cite{girod:11,Mavlankar:10,Halawa:11,Pang:11}. 
While it is conceivable that our proposed merge frame can be applicable in some of these use scenarios for which DSC techniques have been proposed, here we focus on real-time switching among multiple pre-encoded video streams, as discussed in Section \ref{sec:system}.


This paper extends our earlier work~\cite{dai13}, by providing a more detailed presentation and evaluation of the system, as well as introducing two new concepts. First, we study the fixed target merging case (Section \ref{sec:target}). Second, for the optimized target merging case, we develop a new algorithm to compute a locally optimal probability function $P(c)$ for shift $c$---one that leads to more efficient entropy coding of $c$, \textit{and} small signal reconstruction distortion after merging (Section \ref{sec:Solving}). We will show in our experiments, described in Section \ref{sec:results}, that our new algorithm leads to significantly better RD performance than our previously published work~\cite{dai13}.

\section{System Overview}
\label{sec:system}
\subsection{IVSS System Overview}

\begin{figure}[ht]
\centering
\includegraphics[width = 0.45\textwidth]{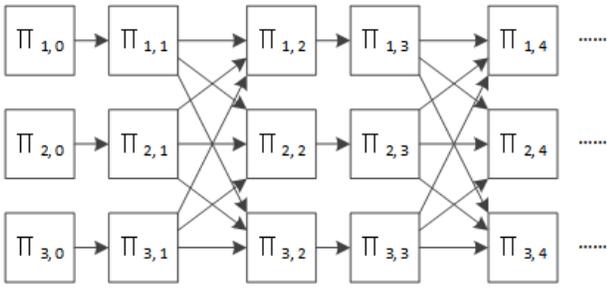}
\caption{Example of an acyclic picture interactivity graph for {\em dynamic view switching}. Each picture $\Pi_{v,t}$ has subscript indicating its view index $v$ and time instant $t$. After viewing picture $\Pi_{2,1}$ of stream 2, the client can choose to keep watching the same stream and jump to $\Pi_{2,2}$, or switch to $\Pi_{1,2}$ or $\Pi_{3,2}$ of stream 1 and 3, respectively.}
\label{fig:DVS}
\end{figure}

We provide an overview of our proposed coding system for \textit{interactive video stream switching} (IVSS), in which our proposed \textit{merge frame} is a key enabling component. In the sequel, a ``picture'' is a raw captured image in a video sequence, while a ``frame'' is a particular coded version of the picture (\textit{e.g.}, I-frame, P-frame). In this terminology, a ``picture'' can have multiple coded versions or ``frames''. 

In an IVSS system, there are multiple pre-encoded video streams that are related (\textit{e.g.}, videos capturing the same 3D scene from different viewpoints \cite{cheung11tip}). During video playback of a single stream, at a \textit{switch instant}, the client can switch from a picture of the original stream to a picture of a different destination stream. Fig.\;\ref{fig:DVS} illustrates an example \textit{picture interactivity graph} for three streams, where there is a switch instant every two pictures in time. An arrow $\Pi_p \rightarrow \Pi_q$ indicates that a switch is possible from picture $\Pi_p$ to picture $\Pi_q$. This particular graph is \textit{acyclic}, {\em i.e.}, it has no loops and we cannot have both $\Pi_p \rightarrow \Pi_q$ and 
$\Pi_q \rightarrow \Pi_p$. 



\begin{figure}[ht]
\centering
\includegraphics[width = 0.4\textwidth]{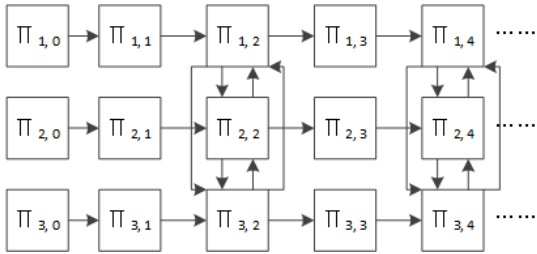}
\caption{Example of a cyclic picture interactivity graph for {\em static view switching}. Each picture $\Pi_{v,t}$ has subscript indicating its view index $v$ and time instant $t$. After viewing $\Pi_{2,2}$ of stream 2, the client can choose to keep watching stream 2 in time and jump to $\Pi_{2,3}$, or change to $\Pi_{1,2}$ or $\Pi_{3,2}$ of stream 1 and 3, respectively, corresponding to the same time instant as $\Pi_{2,2}$.}
\label{fig:SVS}
\end{figure}

The scenario in Fig.\;\ref{fig:DVS} is an example of \textit{dynamic view switching}~\cite{huang12}, where a frame at time $t$ is always followed by a frame at time $t+1$. 
In contrast, in \textit{static view switching} a user can stop temporal playback and interactively select the angle from which to observe a 3D scene frozen in time \cite{lou05}. Fig.\;\ref{fig:SVS} shows an example of static view switching, where the corresponding graph is \textit{cyclic}, \textit{i.e.}, it contains loops so that we can have both $\Pi_p \rightarrow \Pi_q$ and $\Pi_q \rightarrow \Pi_p$. We will discuss the merge frame design for the cyclic case in Section~\ref{sec:target}.


\subsection{Stream-Switch Mechanism in IVSS}

At a given switch instant, stream switching works as follows. First, for each possible switch $\Pi_p \rightarrow \Pi_q$, we encode a P-frame $P_{q\,|\,p}$ for $\Pi_q$, where a decoded version of $\Pi_p$ is used as a predictor. Reconstructed $P_{q\,|\,p}$ is called a \textit{side information} (SI) frame, which constitutes a particular reconstruction of destination $\Pi_q$. Because there are in general multiple origins for a given destination (the \textit{in-degree} for destination picture in the picture interactivity graph), there are multiple corresponding SI frames. Having multiple reconstructions of the same picture $\Pi_q$ creates a problem for the following frame(s) that use $\Pi_q$ as a predictor for predictive coding, because one does not know \textit{a priori} which reconstructed SI frame $P_{q\,|\,p}$ will be available at the decoder buffer for prediction. This illustrates the need for our proposed merge frame (called \textit{M-frame} in the sequel) $M_q$, which is an \textit{extra} frame corresponding to destination $\Pi_q$. Correct decoding of $M_q$ means a unique reconstruction of $\Pi_q$, no matter which SI frame $P_{q\,|\,p}$ is actually available at the decoder.

\begin{figure}[ht]
\centering
\includegraphics[width = 0.4\textwidth]{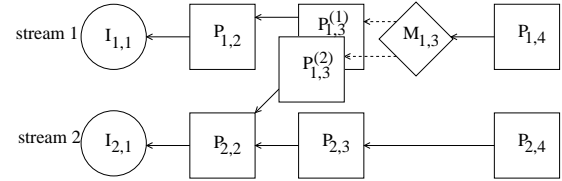}
\caption{Example of stream-switching from one pre-encoded stream to another using merge frame. SI frames $P_{1,3}^{(1)}$ and $P_{1,3}^{(2)}$ are first constructed using predictors $P_{1,2}$ and $P_{2,2}$, respectively. M-frame $M_{1,3}$ is encoded using the two SI frames. I-, P- and M-frames are represented as circles, squares and diamonds, respectively.}
\label{fig:usecase}
\end{figure}

As an illustration, in Fig.\;\ref{fig:usecase} two P-frames, $P_{1,3}^{(1)}$ and $P_{1,3}^{(2)}$,
 generated from predictors $P_{1,2}$ and $P_{2,2}$ respectively, are the SI frames. An M-frame $M_{1,3}$ is added to merge the SI frames to produce an identical reconstruction for $\Pi_{1,3}$. During a stream-switch, the server can transmit any one of the two SI frames \textit{and} $M_{1,3}$ leading to the same reconstructed frame for $\Pi_{1,3}$, thus avoiding coding drift in the following frame $P_{1,4}$.
Note that one P-frame and one M-frame are sent. An alternative approach based on SP frames would require sending a primary SP-frame $S^1_{1,3}$ (using $P_{1,2}$ as the predictor) for the switch $\Pi_{1,2} \rightarrow \Pi_{1,3}$, or a losslessly coded secondary SP-frame $S^2_{1,3}$ (using $P_{2,2}$ as the predictor) for the switch $\Pi_{2,2} \rightarrow \Pi_{1,3}$. SP-frame approaches are asymmetric; rate is much lower when only a primary SP-frame is needed. In contrast, the switching cost using M-frame is always the same (P- and M-frames are transmitted). As will be shown, a combination of a P-frame and an M-frame requires lower rate than a secondary SP-frame. 

\subsection{Merge Frame Overview}

In our proposed M-frame, each fixed-size code block in an SI frame is first transformed to the DCT domain. DCT coefficients are then quantized. The quantized coefficients across SI frames (called \textit{q-coeffs} for short in the sequel) are then examined. If the q-coeffs of a given block are very different across SI frames, then the overhead to merge their differences to targeted q-coeffs would be large. Thus, we will encode the block as a conventional intra block. On the other hand, if the q-coeffs of a given block are already identical across all SI frames, then we can simply inform the decoder that the q-coeffs can be used without further processing. Finally, if the q-coeffs across SI frames are not identical but are similar, then each q-coeff is then merged identically to a target value via our proposed merge operator. Hence, together there are three coding modes for each code block: \textit{intra}, \textit{skip} and \textit{merge}. In this paper, we focus our attention on optimizing the parameters in \textit{merge} mode as the \textit{intra} and \textit{skip} modes are straightforward.




\section{Problem Formulation}
\label{sec:formulation}
\begin{table}
\caption{Table of Notations}
\label{table:notation}
\vspace{-0.15in}
\begin{center}
\begin{footnotesize}
\begin{tabular}{||c|c||} \hline
$N$ & number of SI frames \\ \hline 
$\mathbf{S}^n$ & SI frame $n$ \\ \hline
$\mathbf{T}$ & desired target frame \\ \hline
$\mathbf{M}$ & M-frame \\ \hline
$R(\mathbf{M})$ & rate of M-frame $\mathbf{M}$ \\ \hline
$D(\mathbf{T}, \bar{\mathbf{T}}(\mathbf{M}))$ & distortion of reconstructed $\mathbf{M}$ wrt $\mathbf{T}$ \\ \hline
$\lambda$ & weight parameter to trade off distortion with rate \\ \hline
$\mathcal{B}_M$ & block group encoded in merge mode \\ \hline
$K$ & number of pixels in a code block \\ \hline
$\mathbf{x}^n_b$ & block $b$ of SI frame $\mathbf{S}^n$ \\ \hline
$Y^n_b(k)$ & $k$-th DCT coefficient of block $b$ of SI frame $\mathbf{S}^n$ \\ \hline
$X^n_b(k)$ & $k$-th q-coeff of block $b$ of SI frame $\mathbf{S}^n$ \\ \hline
$Q$ & quantization step size \\ \hline
$\bar{X}_b(k)$ & $k$-th reconstructed q-coeff of block $b$ \\ \hline
$Z^*_b(k)$ & max. pair difference between any pair of $X^n_b(k)$ \\ \hline
$Z^*_{\mathcal{B}_M}(k)$ & group-wise max. pair difference, \textit{i.e.} $\max_{b \in \mathcal{B}_M} Z^*_b(k)$ \\ \hline
$W_{\mathcal{B}_M}(k)$ & step size for $k$-th q-coeff of block group $\mathcal{B}_M$ \\ \hline
$c_b(k)$ & shift parameter for $k$-th q-coeff of block $b$ \\ \hline
$\mathcal{F}_b(k)$ & feasible range of shift $c_b$ for identical merging \\ \hline
$Z_b(k)$ & max. target diff. between target $X^0_b(k)$ and any $X^n_b(k)$ \\ \hline
$Z_{\mathcal{B}_M}(k)$ & group-wise max. target difference, \textit{i.e.} $\max_{b \in \mathcal{B}_M} Z_b(k)$ \\ \hline 
$W^{\#}_{\mathcal{B}_M}(k)$ & step size for $k$-th q-coeff for fixed target merging \\ \hline
\end{tabular}
\end{footnotesize}
\end{center}
\end{table}

\subsection{Notation}
\label{subsec:notation}

We first define the notation that will be used in the sequel; see Table~\ref{table:notation} for quick reference. We denote the $N$ SI frames by $\mathbf{S}^{1}, \ldots, \mathbf{S}^{N}$, one of which is guaranteed to be available at the decoder buffer when M-frame $\mathbf{M}$ is decoded. 
We denote a desired target picture by $\mathbf{T}$ and for notational convenience we will include it in the set of SI frames as  $\mathbf{S}^{0} = \mathbf{T}$.

We denote the group of fixed-size code blocks in $\mathbf{M}$ that are encoded in merge mode by $\mathcal{B}_M$. Each block has $K$ pixels. We denote by $\mathbf{x}^{n}_b$ the $b$-th block in SI frame $\mathbf{S}^n$ coded in merge mode. Each block $\mathbf{x}^{n}_b$ is transformed into the DCT domain as $\mathbf{Y}^{n}_b = [Y^{n}_b(0), \ldots, Y^{n}_b(K-1)]$, where $Y^{n}_b(k)$ is the $k$-th DCT coefficient of $\mathbf{x}^{n}_b$. We denote by $X^{n}_b(k)$ the $k$-th quantized coefficient (\textit{q-coeff}) given uniform quantization step size $Q$: 
\begin{equation}\label{eq:XYQ}
    X^n_b(k) = \mathrm{round} \left( \frac{Y^n_b(k)}{Q} \right) ,
\end{equation}
where $\mathrm{round}(x)$ is the standard rounding operation to the nearest integer.

\subsection{Formulation}
\label{subsec:formulation}

We consider two different problems based on the reconstruction requirement with respect to the desired target $\mathbf{T}$. One typically chooses $\mathbf{T}$ {\em a priori}, \textit{e.g.}, by encoding the target picture independently (intra only) and using the decoded version as $\mathbf{T}$. The first problem requires the M-frame to reconstruct \textit{identically} to desired target $\mathbf{T}$:
\begin{problem}{{\bf Fixed Target Merging}} \label{prob:fixed} (Section~\ref{sec:target}). Find M-frame $\mathbf{M}$ such that the decoder, taking as input \textit{any} one of the SI frames  $\mathbf{S}^{n}$ and $\mathbf{M}$, can reconstruct $\mathbf{T}$ identically as output.  
\end{problem}

Because of the differences between SI frames $\mathbf{S}^n$ and desired target $\mathbf{T}$, there may be situations where a high rate is required for $\mathbf{M}$ (\textit{e.g.}, due to motion in the video sequence, the target frame is very different from previously transmitted frames). In this case, we allow the reconstruction to deviate from desired target $\mathbf{T}$ in order to reduce the rate required for $\mathbf{M}$ by optimizing a rate-distortion criterion: 
\begin{problem}{{\bf Optimized Target Merging}}\label{prob:opt} (Section~\ref{sec:Solving}). Find $\mathbf{M}^*$ and $\bar{\mathbf{T}}(\mathbf{M}^*)$ so that the decoder, taking as input \textit{any} one of SI frames $\mathbf{S}^{n}$ and $\mathbf{M}^*$, can always reconstruct $\bar{\mathbf{T}}(\mathbf{M}^*)$ as output, and where $\mathbf{M}^*$ is an RD-optimal solution for a given weight parameter $\lambda$, \textit{i.e.},
\begin{equation}\label{eq:formular}
\mathbf{M}^* = 
\arg \min_{\mathbf{M}} D(\mathbf{T}, \bar{\mathbf{T}}(\mathbf{M})) + \lambda R(\mathbf{M}) ,
\end{equation}
where $D(\mathbf{T}, \bar{\mathbf{T}}(\mathbf{M}))$ is the distortion incurred (with respect to $\mathbf{T}$) when choosing $\bar{\mathbf{T}}(\mathbf{M})$ as the common reconstructed frame, and $R(\mathbf{M})$ is the rate needed to transmit $\mathbf{M}$. 
\end{problem}

The second problem essentially states that the \textit{reconstruction target} $\bar{\mathbf{T}}(\mathbf{M})$ is RD-optimized with respect to desired target $\mathbf{T}$, while the first problem requires identical reconstruction to desired target $\mathbf{T}$. Note that in both problem formulations we avoid coding drift since they guarantee identical reconstruction for any SI frame, but a solution to Problem~\ref{prob:opt} will be shown to lead to significantly lower coding rates.

\subsection{Piecewise Constant Function for Single Merging}
\label{subsec:PWC}

A merge operation must, given q-coeff $X^n_b(k)$ of any SI frames $\mathbf{S}^n$, $n \in \{1, \ldots, N\}$, reconstruct an identical value $\bar{X}_b(k)$, for all frequencies $k$. 
We use a PWC function $f(x)$ as the chosen merging operator, with {\em shift} $c$ and {\em step size} $W$ parameters selected for each frequency $k$ of each block $b$ encoded in merge mode (see Fig.~\ref{fig:pwc}). 
The selection of these parameters influences the RD performance of this merging operation for the optimized target merging case.  We now focus our discussion on how $c$ and $W$ are selected for each coefficient. Because the optimization is the same for each frequency $k$, we will drop the frequency index $k$ for simplicity of presentation.


Examples of PWC functions are \texttt{ceiling}, \texttt{round}, \texttt{floor}, etc. In this paper, we employ the \texttt{floor} function\footnote{We define \texttt{floor} function to minimize the maximum difference between original $x$ and reconstructed $f(x)$, given shift $c$ and step size $W$.}:
\begin{equation}
f(x) = \left\lfloor \frac{x+c}{W} \right\rfloor W + \frac{W}{2} - c.
\label{eq:floor}
\end{equation}
From Fig.\;\ref{fig:pwc}, it is clear that there are numerous combinations of parameters $W$ and $c$ such that identical merging is ensured---\textit{i.e.}, all $X^n_b$ map to the same constant interval. 
Note also that the choice of $W$ depends on how spread out the various
$X^0_b, \ldots, X^N_b$ are, that is, how correlated the SI blocks are
to each other. In contrast, $c$ is used to select a desired
reconstruction value $X^0_b$.  Thus, because the level of correlation
can be assumed to be relatively consistent across blocks, a
\textit{step size $W_{\mathcal{B}_M}$ is selected once for all blocks
  $b \in \mathcal{B}_M$ for a given frequency}. On the other hand, since the actual reconstruction value will be different from block to block, the {\em
shift $c_b$ will be selected on a per block basis for a given frequency}.
%

Before formulating the problem of optimizing the choice of $c$ and $W$, we derive constraints under which this selection is made by determining: 
\begin{itemize}
\item The minimum value of $W$ that guarantees identical merging, 
\item The choice of $c$ that guarantees correct reconstruction, 
\item Effective range of $c$. 
\end{itemize}


We first compute a minimum step size $W$ to enable identical merging for blocks $b$ in $\mathcal{B}_M$. Let $Z_b^*$ be the \textit{maximum pair difference} between any pair of q-coeffs of a given frequency in block $b$, \textit{i.e.},
\begin{equation}\label{eq:WX}
Z_b^* = \max_{i,j \in \{0, \ldots, N\}} X^i_b - X^j_b 
= X^{\max}_b - X^{\min}_b ,
\end{equation}
where $X^{\max}_b$ and $X^{\min}_b$ are respectively the maximum and minimum q-coeffs among the SI frames, \textit{i.e.},
\begin{equation}
    X^{\max}_b = \max_{n = 0, \ldots, N} X^n_b, ~~~~~
    X^{\min}_b = \min_{n = 0, \ldots, N} X^n_b.
\end{equation}
Given $Z_b^*$, we next define the \textit{group-wise maximum pair difference} $Z_{\mathcal{B}_M}^*$ for the blocks in group $\mathcal{B}_M$:
\begin{equation}
Z_{\mathcal{B}_M}^* = \max_{b \in \mathcal{B}_M} Z_b^* .
\label{eq:Wk}
\end{equation}

Since all $X^n_b$ are integer, $Z_{\mathcal{B}_M}^*$ is also an integer.
We can now establish a minimum for step size $W_{\mathcal{B}_M}$ above which identical merging for all blocks $b \in \mathcal{B}_M$ is achievable:

\begin{fact}{\textbf{Minimum Step Size for Identical Merging}}\label{fact:min-step}: a step size $W_{\mathcal{B}_M} > Z_{\mathcal{B}_M}^* $, is large enough for \texttt{floor} function $f(X^n_b)$ in (\ref{eq:floor}) to merge any $X^{n}_b$ in $\mathcal{B}_M$ to a same value $\bar{X}_b$.
\end{fact}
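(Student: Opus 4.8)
The plan is to show that the hypothesis $W_{\mathcal{B}_M} > Z^*_{\mathcal{B}_M}$ guarantees, for every block $b \in \mathcal{B}_M$, the existence of a shift $c$ that places all q-coeffs $X^0_b, \ldots, X^N_b$ inside a single constant interval of the \texttt{floor} function in (\ref{eq:floor}), so that they all map to the same reconstructed value $\bar{X}_b$. The starting observation is that the output of $f$ in (\ref{eq:floor}) depends on $x$ only through the integer $\lfloor (x+c)/W \rfloor$; hence two inputs map to the same reconstruction exactly when they share this integer value, \textit{i.e.}, when they lie in the same half-open interval $[mW - c, (m+1)W - c)$ of length $W$ for some integer $m$.

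First I would reduce ``identical merging'' to an interval-containment condition. All $X^n_b$ merge to a common $\bar{X}_b$ if and only if there exist an integer $m$ and a shift $c$ with $mW - c \le X^{\min}_b$ and $X^{\max}_b < (m+1)W - c$, where $X^{\min}_b$ and $X^{\max}_b$ are the extreme q-coeffs. Rearranging these two inequalities isolates the admissible range of $c$, namely $mW - X^{\min}_b \le c < (m+1)W - X^{\max}_b$.

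The crux of the argument is to verify that this range for $c$ is nonempty. Its left endpoint lies strictly below its right endpoint precisely when $mW - X^{\min}_b < (m+1)W - X^{\max}_b$, which simplifies to $X^{\max}_b - X^{\min}_b < W$, that is, $Z^*_b < W$ by the definition of $Z^*_b$ in (\ref{eq:WX}). This is exactly where the hypothesis enters: since $Z^*_{\mathcal{B}_M} = \max_{b \in \mathcal{B}_M} Z^*_b \ge Z^*_b$ for every block, the assumption $W_{\mathcal{B}_M} > Z^*_{\mathcal{B}_M}$ yields $W_{\mathcal{B}_M} > Z^*_b$ simultaneously for all $b \in \mathcal{B}_M$, so each block admits a nonempty set of feasible shifts $c$. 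Choosing any such $c$ (for a suitable $m$) achieves identical merging, which establishes the claim.

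The main point to handle with care is the strictness of the inequalities, which traces back to the half-open nature of the constant intervals $[mW-c,(m+1)W-c)$: because the interval is closed on the left but open on the right, a spread equal to $W$ would not fit, so the \emph{strict} inequality $W_{\mathcal{B}_M} > Z^*_{\mathcal{B}_M}$ (rather than $\ge$) is essential. I would also remark that, since all $X^n_b$ are integers and hence $Z^*_{\mathcal{B}_M}$ is an integer, the smallest admissible step size in practice is $W_{\mathcal{B}_M} = Z^*_{\mathcal{B}_M} + 1$, which motivates the subsequent derivation of the feasible range $\mathcal{F}_b$ of $c$ and the explicit choice of the shift.
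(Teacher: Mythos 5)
Your proof is correct and takes essentially the same route as the paper: you reduce identical merging to the interval-containment condition $mW_{\mathcal{B}_M} - X^{\min}_b \le c_b < (m+1)W_{\mathcal{B}_M} - X^{\max}_b$ (the paper's bound (\ref{eq:c_bound})) and observe that this shift range is nonempty exactly when $W_{\mathcal{B}_M} > X^{\max}_b - X^{\min}_b = Z^*_b$, which the hypothesis $W_{\mathcal{B}_M} > Z^*_{\mathcal{B}_M} \ge Z^*_b$ guarantees simultaneously for all blocks $b \in \mathcal{B}_M$. Your closing remarks on the strictness of the inequality (due to the half-open intervals) and on $W_{\mathcal{B}_M} = Z^*_{\mathcal{B}_M}+1$ being the smallest admissible integer step size also match the paper's subsequent discussion in Section~\ref{subsec:W}.
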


Since each $\mathbf{S}^n$ is a coarse approximation of (and thus is similar to) desired target $\mathbf{T}$, the $\mathbf{S}^n$'s themselves are similar. Hence, the largest difference $Z_b^*$ should be small in the typical case. Indeed, we observe empirically that $Z^*_b$ follows an exponential distribution (one-sided because $Z_b^*$ is non-negative). Fig.\;\ref{fig:Z} shows $Z^*_b$ probability distribution for $k=16$ and $k=32$. We can see that 80\% of the blocks have $Z^*_b \leq 5$.
Assuming that $Z^*_b$ follows a Laplacian distribution, the maximum
$Z_{\mathcal{B}_M}^*$ is typically much larger than the average
$Z^*_b$. This will be shown to be useful for the optimized merging of
Section~\ref{sec:Solving}. 


\begin{figure}[htb]
\begin{minipage}[b]{0.45\linewidth}
  \centering
  \centerline{\includegraphics[width=45mm]{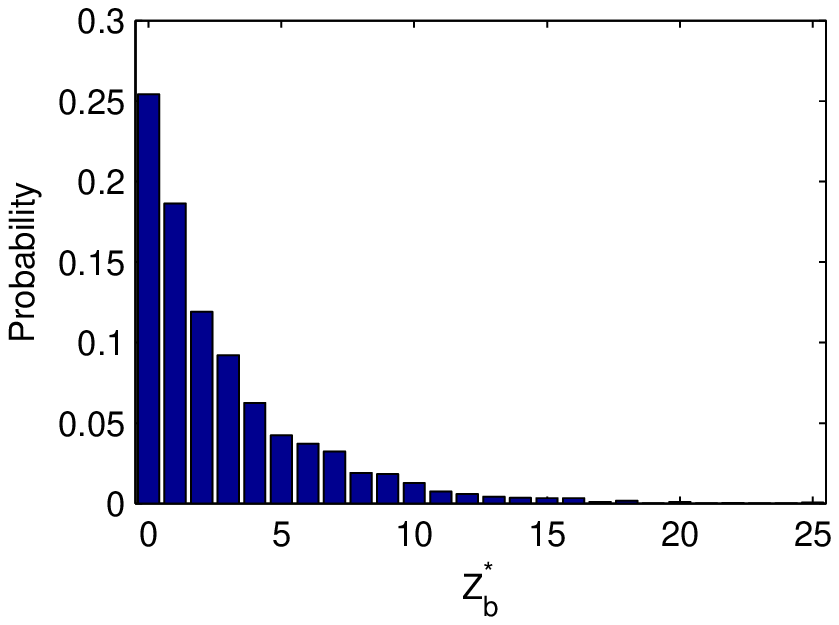}}\medskip
\end{minipage}
\hfill
\begin{minipage}[b]{0.45\linewidth}
  \centering
  \centerline{\includegraphics[width=45mm]{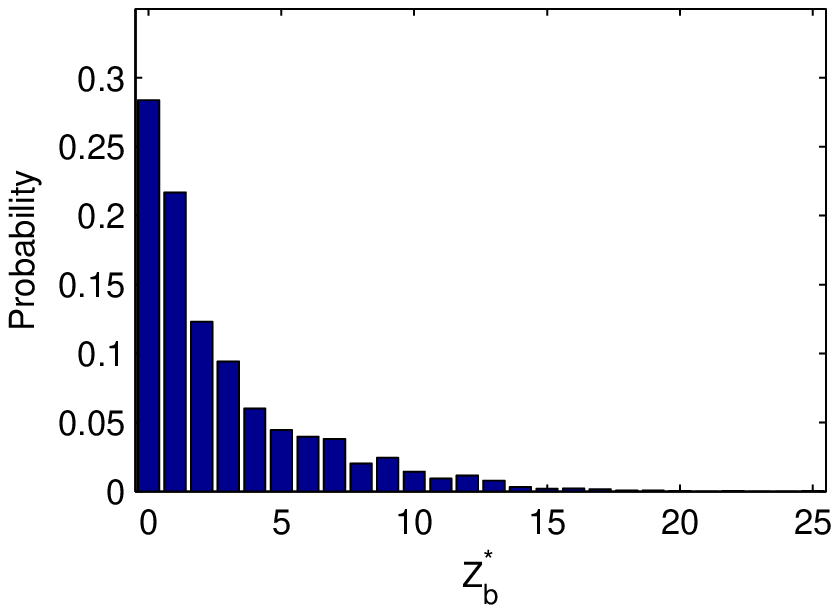}}\medskip
\end{minipage}
\vspace{-0.2in}
\caption{Two examples of probability distribution of $Z^*_b$ with three SI frames at $Q = 1$ for \texttt{Balloons} at frequency $k=16$ and $k=32$.}
\label{fig:Z}
\end{figure}

Fact~\ref{fact:min-step} states that step size $W_{\mathcal{B}_M}$ is wide enough so that $X^{0}_b,\ldots, X^{N}_b$ can all fall on the same 
interval in $f(x)$, as shown in Fig.\;\ref{fig:pwc}. However, given $W_{\mathcal{B}_M}$, shift $c_b$ must still be appropriately chosen \textit{per block} to achieve identical merging. 


Mathematically, identical merging means that the \texttt{floor} function with parameters $c_b$ and $W_{\mathcal{B}_M}$ produces the same integer output for all inputs $X^n_b$, that is: 
\begin{equation}
\left\lfloor \frac{X^{n}_b + c_b}{W_{\mathcal{B}_M}} \right\rfloor = \left\lfloor \frac{X^{0}_b + c_b}{W_{\mathcal{B}_M}} \right\rfloor , ~~~ \forall n \in \{1, \ldots, N\} .
\label{eq:cCondition}
\end{equation}

Thus for all $X^n_b$, we must have 
for some $m \in \mathbb{Z}$ that: 
\begin{equation}\label{equ:W}
    mW_{\mathcal{B}_M} \leq X^{n}_b + c_b < (m+1)W_{\mathcal{B}_M} ,
~~ \forall n \in \{0, \ldots, N\}
\end{equation}
Instead of considering all $X^n_b$'s, it is sufficient to consider only the maximum and minimum values, so that the maximum range for $c_b$ that guarantees identical reconstruction is: 
\begin{equation}
mW_{\mathcal{B}_M} - X^{\min}_b \leq c_b < (m+1)W_{\mathcal{B}_M} - X^{\max}_b 
\label{eq:c_bound}
\end{equation}
for some integer $m$. Note that given step size $W_{\mathcal{B}_M}$, $c_b$ and $c_b + m W_{\mathcal{B}_M}$ lead to the same output: 
\begin{align}
f(x) = &
\left\lfloor \frac{x + c_b + m W_{\mathcal{B}_M}}{W_{\mathcal{B}_M}}\right\rfloor
W_{\mathcal{B}_M} + \frac{W_{\mathcal{B}_M}}{2} - (c_b + m W_{\mathcal{B}_M}) 
\nonumber \\
= & \left\lfloor \frac{x + c_b}{W_{\mathcal{B}_M}}\right\rfloor
W_{\mathcal{B}_M} + \frac{W_{\mathcal{B}_M}}{2} - c_b
\nonumber
\end{align}
Thus it will be sufficient to consider at most $W$ different values of $c_b$ as possible candidates. 

Define $\alpha = X^{\min}_b \bmod W_{\mathcal{B}_M} $ and 
$\beta = X^{\max}_b \bmod W_{\mathcal{B}_M} $ and 
consider the two possible cases. 
\begin{itemize}
\item In case (i) $X^{\min}_b = m W_{\mathcal{B}_M} + \alpha$ and 
$X^{\max}_b = m W_{\mathcal{B}_M} + \beta$, where $\alpha < \beta$, so
that $X^{\min}_b$ and $X^{\max}_b$ fall in the same interval when
there is no shift, $c_b=0$. Hence we can have $ -\alpha \leq c_b < W_{\mathcal{B}_M} - \beta$ in order to keep both $X^{\min}_b$ and $X^{\max}_b$ in the interval $[m W_{\mathcal{B}_M}, (m+1) W_{\mathcal{B}_M})$.
\item In case (ii) $X^{\min}_b = m W_{\mathcal{B}_M} + \alpha$ 
and 
$X^{\max}_b = (m+1) W_{\mathcal{B}_M} + \beta$, where $\beta <
\alpha$, \textit{i.e.}, when $c_b=0$, $X^{\min}_b$ and $X^{\max}_b$ fall in
neighboring intervals. Here we can have $ -\alpha \leq c_b < -\beta $ to move $X^{\max}_b$ down to the interval $[m W_{\mathcal{B}_M}, (m+1) W_{\mathcal{B}_M})$, or have $ W_{\mathcal{B}_M} -\alpha < c_b \leq W_{\mathcal{B}_M} -\beta $ to move $X^{\min}_b$ up to the interval $[(m+1) W_{\mathcal{B}_M}, (m+2) W_{\mathcal{B}_M})$.
\end{itemize} 
Note that the selection of $W_{\mathcal{B}_M}$ (Fact 1) implies that $X^{\max}_b - X^{\min}_b < W_{\mathcal{B}_M}$, and $\alpha = \beta$ only if $X^{\min}_b = X^{\max}_b$, in which case there is no merging needed and any $c_b$ would suffice.
\begin{figure}[htb]
\begin{minipage}[b]{0.47\linewidth}
  \centering
  \centerline{\includegraphics[width=45mm]{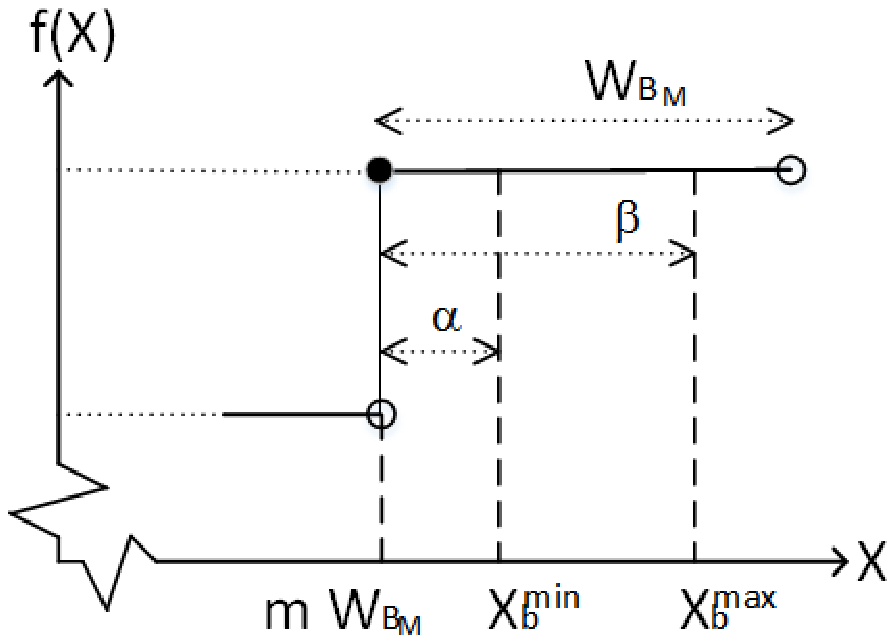}}\medskip
\end{minipage}
\hfill
\begin{minipage}[b]{0.47\linewidth}
  \centering
  \centerline{\includegraphics[width=45mm]{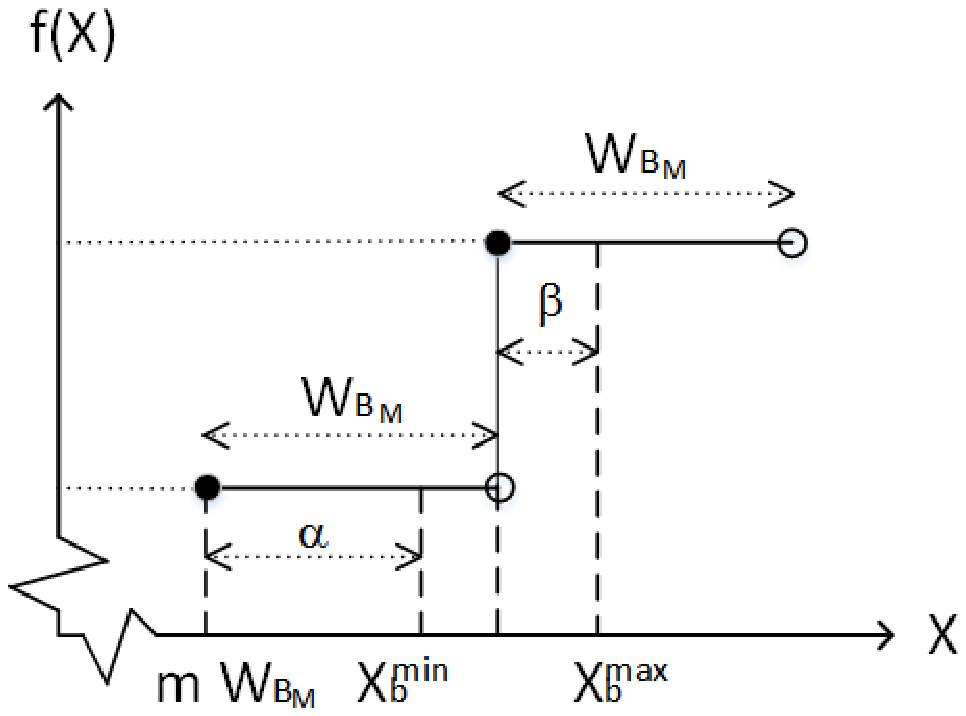}}\medskip
\end{minipage}
\vspace{-0.2in}
\caption{Two cases of $X^{\min}_b$ and $X^{\max}_b$ (left: $\alpha < \beta$ and right: $\alpha > \beta$) and their implications on the feasible range of shift $c_b$.}
\label{fig:alphaBeta}
\end{figure}

The two cases ($\alpha < \beta$ and $\alpha > \beta$) are illustrated in Fig.\;\ref{fig:alphaBeta}. 
Note that given $ X^{\max}_b \geq  X^{\min}_b$ by definition, we will be in Case (ii) whenever $\beta < \alpha$. Thus we can summarize this result as: 
\begin{fact}{\textbf{Maximum Feasible Range $\mathcal{F}_b$ for Shift $c_b$}}\label{fact:feasible}: For the shift $c_b$ to provide identical merging of q-coeffs $X^0_b, \ldots X^N_b$ to a same value $\bar{X}_b$, given step size $W_{\mathcal{B}_M}$
\[
c_b \in \mathcal{F}_b =  [-\alpha, W_{\mathcal{B}_M} - \beta) \;\; {\rm if} \;\; \alpha < \beta 
\]
and
\[
c_b \in \mathcal{F}_b = [W_{\mathcal{B}_M} -\alpha, W_{\mathcal{B}_M} - \beta) \;\; {\rm if} \;\; \alpha > \beta \\
\]
with   
$\alpha = X^{\min}_b \bmod W_{\mathcal{B}_M} $ and 
$\beta = X^{\max}_b \bmod W_{\mathcal{B}_M} $.
\end{fact}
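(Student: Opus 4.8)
The plan is to reduce the identical-merging condition \eqref{eq:cCondition}, which nominally involves all $N{+}1$ coefficients $X^0_b,\ldots,X^N_b$, to a condition on the two extremes $X^{\min}_b$ and $X^{\max}_b$ alone, and then to solve the resulting containment inequality for $c_b$. First I would invoke the monotonicity of the \texttt{floor} function: since $\lfloor\,\cdot\,\rfloor$ is nondecreasing and every $X^n_b$ satisfies $X^{\min}_b \le X^n_b \le X^{\max}_b$, the equal-floor requirement \eqref{eq:cCondition} holds for all $n$ if and only if $\lfloor (X^{\min}_b+c_b)/W_{\mathcal{B}_M}\rfloor = \lfloor (X^{\max}_b+c_b)/W_{\mathcal{B}_M}\rfloor$. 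Equivalently, there must exist an integer $m'$ with $m' W_{\mathcal{B}_M} \le X^{\min}_b + c_b$ and $X^{\max}_b + c_b < (m'+1) W_{\mathcal{B}_M}$, i.e. both shifted extremes lie in one half-open step interval. This is exactly the constraint already recorded in \eqref{equ:W}--\eqref{eq:c_bound}.

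Next I would parametrize the extremes by their residues, writing $X^{\min}_b = q_{\min} W_{\mathcal{B}_M} + \alpha$ and $X^{\max}_b = q_{\max} W_{\mathcal{B}_M} + \beta$ with $\alpha,\beta\in[0,W_{\mathcal{B}_M})$, and determine $q_{\max}-q_{\min}$ from Fact~\ref{fact:min-step}. Since that fact guarantees $X^{\max}_b - X^{\min}_b = Z^*_b < W_{\mathcal{B}_M}$, the extremes span strictly less than one step, forcing $q_{\max}-q_{\min}\in\{0,1\}$; the residue comparison then pins this down. When $\alpha<\beta$ the gap $\beta-\alpha$ already lies in $(0,W_{\mathcal{B}_M})$, so $q_{\max}=q_{\min}$ (Case~(i), a common interval at $c_b=0$); when $\alpha>\beta$ the extremes straddle a boundary and $q_{\max}=q_{\min}+1$ (Case~(ii)); and $\alpha=\beta$ can occur only in the degenerate situation $X^{\min}_b=X^{\max}_b$ where no merging is needed. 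Substituting each case into the containment inequality and solving for $c_b$ then yields $[-\alpha,\,W_{\mathcal{B}_M}-\beta)$ in Case~(i) and, in Case~(ii), either $[-\alpha,-\beta)$ (pulling both extremes down into interval $m$) or $[W_{\mathcal{B}_M}-\alpha,\,W_{\mathcal{B}_M}-\beta)$ (pushing both up into interval $m{+}1$). A final appeal to the periodicity established just before the statement---that $c_b$ and $c_b + m W_{\mathcal{B}_M}$ produce identical output---shows these two Case~(ii) ranges are the same set modulo $W_{\mathcal{B}_M}$, so I would report the canonical representative $[W_{\mathcal{B}_M}-\alpha,\,W_{\mathcal{B}_M}-\beta)$ appearing in Fact~\ref{fact:feasible}.

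I expect the \textbf{main obstacle} to be the bookkeeping in Case~(ii): one must correctly argue $q_{\max}-q_{\min}=1$ and then track which target interval ($m$ or $m{+}1$) each candidate range corresponds to, so as neither to report a spurious interval nor to drop a valid one. A useful consistency check I would carry out is that in both cases the feasible range has length exactly $W_{\mathcal{B}_M}-Z^*_b$, which is strictly positive precisely because Fact~\ref{fact:min-step} supplies $W_{\mathcal{B}_M}>Z^*_{\mathcal{B}_M}\ge Z^*_b$; this simultaneously confirms that $\mathcal{F}_b$ is nonempty and that it is a single contiguous interval with the claimed endpoints rather than a fragmented union.
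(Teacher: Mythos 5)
Your proposal is correct and follows essentially the same route as the paper: reduce the merging condition to the extremes $X^{\min}_b$ and $X^{\max}_b$, decompose them by residues modulo $W_{\mathcal{B}_M}$, split into the cases $\alpha<\beta$ and $\alpha>\beta$, and invoke periodicity of $c_b$ modulo $W_{\mathcal{B}_M}$ to collapse the two Case~(ii) ranges into one representative. Your added details (the explicit monotonicity argument, pinning down $q_{\max}-q_{\min}\in\{0,1\}$ via Fact~\ref{fact:min-step}, and the length check $|\mathcal{F}_b| = W_{\mathcal{B}_M}-Z^*_b > 0$) are careful elaborations of steps the paper states without proof, and you even get the endpoint conventions of the Case~(ii) interval right where the paper's in-text derivation has an inequality-direction typo.
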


\subsection{Formulation of Merge Frame RD-Optimization}
\label{sec:RD-opt}


In order to formulate the PWC function parameter optimization problem,
we first define distortion, $d_b$, as the squared difference between coefficient $Y_b^0$ of the desired target $\mathbf{T}$ and reconstructed coefficient $f(X^0_b) \, Q$:
\begin{equation}
d_b ~=~ | \, Y_b^0 - f(X^0_b) \, Q \, |^2 .
\label{eq:distort}
\end{equation}
Because shift $c_b$ will be always chosen within the feasible range
defined in Fact~\ref{fact:feasible}, all q-coeffs $X^n_b$ will map to
the same value $f(X^n_b), \forall n \in \{0, \ldots, N\}$. Thus we
only need to compute the distortion for $f(X^0_b)$ in
(\ref{eq:distort}).


For the $k$-th q-coeff in block group $\mathcal{B}_M$, the encoder
will have to transmit to the decoder:
\begin{enumerate}
\item one step size $W_{\mathcal{B}_M}(k) > Z_{\mathcal{B}_M}(k)$ for
  each group $\mathcal{B}_M$.
\item one shift $c_b(k)$ for each block $b$ in group $\mathcal{B}_M$.
\end{enumerate}
The cost of encoding a single $W_{\mathcal{B}_M}(k)$ for all $k$-th
q-coeffs in group $\mathcal{B}_M$ is small, while the cost of encoding
$|\mathcal{B}_M|$ shifts $c_b(k)$ for each of the $k$-th q-coeffs can
be significant. Thus we consider only the rate associated to $c_b(k)$
in our optimization. 

Note that since the high-frequency DCT coefficients of a given code
block are very likely zero, we can insert an \textit{End of Block}
(EOB) flag $E_b$ to signal the remaining high-frequency q-coeffs in
block $b$ in a raster-scan order are 0. Effective use of $E_b$ can
reduce the amount of transmitted PWC function parameters\footnote{In
  the fixed target merging case, $E_b$ is inserted when the remaining
  high-frequency q-coeffs of a block $b$ in target $\mathbf{T}$ are
  exactly zero. In the optimized target case, $E_b$ can be inserted in
  an RD-optimal manner on a per-block basis, similar to what is done
  in coding standards such as H.264~\cite{wiegand03}.}. In summary, we
can define the RD optimized target merging problem as:  
\begin{equation}
\label{equ:opt}
\min_{W_{\mathcal{B}_M}(k), \, c_b(k)}
\quad \sum_{b \in \mathcal{B}_M} D_b + \lambda R_b ,
~~~ \begin{array}{l}
W_{\mathcal{B}_M}(k) > Z_{\mathcal{B}_M}(k) \\
c_b(k) \in \mathcal{F}_b(k) 
\end{array}
\end{equation}
with distortion  $D_b$  and rate $R_b$ for block $b$ calculated as:
\begin{eqnarray*}
  D_b & = & \sum_{k=0}^{E_b} d_b(k) + \sum_{k = E_b+1}^{K-1} Y_b^0(k)^2 \\
  R_b & = & \sum_{k=0}^{E_b} R(c_b(k)) ,
\end{eqnarray*}
where $d_b(k)$ is defined in (\ref{eq:distort}) and $R(c_b(k))$ is the
rate to encode $c_b(k)$. We discuss how we tackle this optimization in
Section~\ref{sec:Solving}.


\section{Fixed Target Merging}
\label{sec:target}

In certain applications, such as the static view switching scenario discussed in Section~\ref{sec:system} and illustrated in Fig.\;\ref{fig:SVS}, the picture interactivity graph is cyclic, so that we may have that  $\Pi_p \rightarrow \Pi_q$ and $\Pi_q \rightarrow \Pi_p$. Because of this interdependency, one cannot directly define a simple target merging optimization, since optimizing the reconstruction for $\Pi_q$ would require first fixing a representation (frame) for $\Pi_p$, but optimizing  $\Pi_p$ would in turn require first fixing a representation for $\Pi_q$. As a simple alternative we propose \textit{fixed target merging}, where the reconstruction target $\mathbf{T}$ for each picture is chosen independently from the SI frames. For example, $\mathbf{T}$ can be the I-frame of the target picture for a given QP.


\subsection{Fixed Target Reconstruction using Merge Operator}


We first show that given a target reconstruction value $a$ and a step size $W$, we can always find a shift $c$ so that $f(x)$ in (\ref{eq:floor}) is such that $f(x) = a$ for all inputs $x$ in the interval $[a - W/2, a+W/2)$. To see this, first write target reconstruction value $a = a_1 W + a_2$, where $a_1$ and $a_2 = a \bmod W$ are integers and  $0 \leq a_2 < W$. Similarly, we write input $x = a_1 W + x_2$ where integer $x_2$ can be bounded:
\begin{align}
a - \frac{W}{2} \leq & ~~x ~ < a + \frac{W}{2} \nonumber \\
a_1 W + a_2 - \frac{W}{2} \leq & ~~ a_1 W + x_2 ~ < a_1 W + a_2 + \frac{W}{2} 
\nonumber \\
a_2 - \frac{W}{2} \leq & ~~ x_2 ~ < a_2 + \frac{W}{2}
\label{eq:boundX2}
\end{align}

We now set $c = \frac{W}{2} - a_2$. We show that this ensures $f(x) = a$ for $x \in [a - W/2, a + W/2)$:
\begin{align}
f(x) & = \left\lfloor \frac{a_1 W + x_2 + \frac{W}{2} - a_2}{W} \right\rfloor W + \frac{W}{2} - \left( \frac{W}{2} - a_2 \right) 
\label{eq:targetMerge} \\
& = a_1 W + a_2  = a . \nonumber
\end{align}
where the second line is true because $x_2 + \frac{W}{2} - a_2$ in the numerator of the ``round-down" operator argument can be bounded  in $[0, W)$ using (\ref{eq:boundX2}):
\begin{align}
a_2 - \frac{W}{2} + \frac{W}{2} - a_2 \leq & ~~ x_2 + \frac{W}{2} - a_2 ~ < a_2 + \frac{W}{2} + \frac{W}{2} - a_2 \nonumber \\
0 \leq & ~~ x_2 + \frac{W}{2} - a_2 ~ < W
\end{align}



Next, recall from Section~\ref{subsec:PWC} that we include the desired target $\mathbf{T}$ as the first SI frame $\mathbf{S}^0$. For a given frequency of a particular block $b$, we first compute the \textit{maximum target difference} $Z_b$ as the largest absolute difference between target q-coeff $X^0_b$ and $X^n_b$ of any SI frame $\mathbf{S}^n$, \textit{i.e.},
\begin{equation}
Z_b = \max_{n \in \{1, \ldots, N\}} 
\left| X^0_b - X^n_b \right| 
\label{eq:maxDiff}
\end{equation}
Based on this we can choose step size and shift based on the following lemma. 
\begin{lemma}
\label{lemma:merge}
Choosing step size $W^{\#}_b = 2 Z_b + 2$ and shift $c_b = W^{\#}_b/2 - X^0_{b,2}$, where $X^0_{b,2} = X^0_b \bmod W^{\#}_b$, guarantees that $f(X^n_b) = X^0_b, ~ \forall n \in \{0, \ldots, N\}$. Note that $W^{\#}_b$ is an even number, and $c$ is an integer as required.  
\end{lemma}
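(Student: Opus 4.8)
The plan is to reduce the lemma to the center-interval reconstruction property established in the derivation leading to (\ref{eq:targetMerge}). That property states that for any integer target value $a$ and step size $W$, setting $c = W/2 - (a \bmod W)$ forces $f(x) = a$ for every input $x$ in the half-open interval $[a - W/2, a + W/2)$. First I would instantiate this property with $a = X^0_b$ and $W = W^{\#}_b$. By construction the shift prescribed in the lemma, $c_b = W^{\#}_b/2 - X^0_{b,2}$ with $X^0_{b,2} = X^0_b \bmod W^{\#}_b$, is exactly the $c$ that the center-interval property requires; hence $f(x) = X^0_b$ holds for all $x \in [X^0_b - W^{\#}_b/2, X^0_b + W^{\#}_b/2)$.

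The remaining task is to verify that every q-coeff $X^n_b$ lands inside this interval. By the definition of $Z_b$ in (\ref{eq:maxDiff}) we have $|X^0_b - X^n_b| \le Z_b$ for all $n \in \{1, \ldots, N\}$, and this bound holds trivially for $n = 0$; hence $X^n_b \in [X^0_b - Z_b, X^0_b + Z_b]$. Since $W^{\#}_b = 2Z_b + 2$ gives $W^{\#}_b/2 = Z_b + 1$, the reconstruction interval is $[X^0_b - Z_b - 1, X^0_b + Z_b + 1)$, which strictly contains $[X^0_b - Z_b, X^0_b + Z_b]$ with one unit of slack at each end. Applying the center-interval property to each $X^n_b$ then yields $f(X^n_b) = X^0_b$ for every $n \in \{0, \ldots, N\}$, which is the claimed identical merging.

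The one point deserving care, and the reason the step size is $2Z_b + 2$ rather than the naive $2Z_b$, is the half-open nature of the reconstruction interval. Because the upper endpoint is excluded, the choice $W = 2Z_b$ would fail whenever some $X^n_b$ attains the extreme value $X^0_b + Z_b$, which the maximum $Z_b$ permits; the extra $+2$ restores a strict upper inequality with room to spare. It also keeps the step size even, so that $W^{\#}_b/2 = Z_b + 1$ is an integer, which in turn makes $c_b = (Z_b + 1) - X^0_{b,2}$ an integer as the encoding requires. I would close by recording these two integrality observations explicitly, since they form the final assertion of the lemma. This last bookkeeping is routine, so the only genuine subtlety is the endpoint argument that dictates the choice of $W^{\#}_b$.
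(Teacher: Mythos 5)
Your proof is correct and follows essentially the same route as the paper's: invoke the center-interval property from the derivation of (\ref{eq:targetMerge}) with $a = X^0_b$ and $W = W^{\#}_b$, then use the definition of $Z_b$ in (\ref{eq:maxDiff}) to show every $X^n_b$ lies in $[X^0_b - Z_b - 1,\, X^0_b + Z_b + 1)$. Your added remarks on the half-open endpoint (why $2Z_b$ fails) and on evenness/integrality are sound elaborations of points the paper leaves implicit, but they do not change the argument.
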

\begin{proof}
Given shift $c_b = W^{\#}_b/2 - X^0_{b,2}$, showing $X^n_b \in [X^0_{b} - W^{\#}_b/2, X^0_{b} + W^{\#}_b/2)$ implies $f(X^n_b) = X^0_b, ~ \forall n \in \{0, \ldots, N\}$. Defining step size $W^{\#}_b = 2 Z_b + 2$ means the required  interval for $X^n_b$ can be rewritten as $[X^0_{b} - Z_b - 1, X^0_{b} + Z_b + 1)$. By the definition of $Z_b$, we know $X^0_b - Z_b \leq X^n_b \leq X^0_b + Z_b$. Hence the required interval for $X^n_b$ is met.
\end{proof}





Note that we can achieve fixed target merging for a given $X^0_b$ as long as the step size is larger than $W^{\#}_b$. For example, we can assign the same step size $W^{\#}_{\mathcal{B}_M}$ for all blocks in a group $\mathcal{B}_M$, so that we reduce the rate overhead: 
\begin{equation}
W^{\#}_{\mathcal{B}_M} = 2 + 2 Z_{\mathcal{B}_M}
\label{eq:perfectStepSize}
\end{equation}
where $Z_{\mathcal{B}_M} = \max_{b \in \mathcal{B}_M} Z_b$ is the \textit{group-wise maximum target difference}, and $Z_b$, the block-wise maximum target difference for block $b$, is computed using (\ref{eq:maxDiff}).
In summary:  
\begin{enumerate}
\item We define a set of blocks $\mathcal{B}_M$ and use $W^{\#}_{\mathcal{B}_M}(k)$ computed using (\ref{eq:perfectStepSize}) for frequency $k$ of all blocks in $\mathcal{B}_M$.
\item For block $b$, we set shift $c_b(k) = W^{\#}_{\mathcal{B}_M}(k)/2 - X^0_{b,2}(k)$, where $X^0_{b,2}(k) = X^0_b(k) \bmod  W^{\#}_{\mathcal{B}_M}(k)$. A different shift is used for each frequency $k$ and block $b$, and transmitted as part of the M-frame along with $W^{\#}_{\mathcal{B}_M}(k)$. 
\end{enumerate}

\section{Optimized Target Merging}
\label{sec:Solving}

We now propose a merging approach based on selecting $W_{\mathcal{B}_M}(k)$ and $c_b(k)$ so as to find a solution to the optimization problem described in Section~\ref{sec:RD-opt}, where we allow the reconstructed value to be different from $X^0_b(k)$. 

If $W_{\mathcal{B}_M}$ is chosen large enough, \textit{i.e.} $W_{\mathcal{B}_M} \geq 2 + 2 Z_{\mathcal{B}_M}$, then we have shown (Lemma~\ref{lemma:merge}) that one can select shift $c_b$ to reconstruct target q-coeff $X^0_b$ exactly. 
However, the shifts are a function of $X^0_{b,2} = X^0_b \mod W_{\mathcal{B}_M}$ (Lemma~\ref{lemma:merge}), and thus we can expect them to have a uniform distribution, which would mean that a rate of the order of $\log_2(W_{\mathcal{B}_M})$ would be required as overhead. In order to reduce this rate, we use two approaches: i) we allow $W_{\mathcal{B}_M}$ to be smaller than required by Lemma~\ref{lemma:merge}, and ii) when multiple choices of $c_b$ provide identical reconstruction, we optimize this choice based on the criteria introduced in Section~\ref{sec:RD-opt}.


\subsection{Selection of  $W_{\mathcal{B}_M}$}
\label{subsec:W}



Note, by definition of  $Z^*_{\mathcal{B}_M}$, we are guaranteed that all $X^n_b$ can be within an interval of size $W_{\mathcal{B}_M} $ as long as $W_{\mathcal{B}_M} > Z^*_{\mathcal{B}_M}$, provided we transmit an appropriate $c_b$ (Fact \ref{fact:min-step}). Reducing $W_{\mathcal{B}_M}$ from $2 + 2 Z_{\mathcal{B}_M}$ can reduce the rate required to transmit $c_b$, since $c_b$ can take at most $W_{\mathcal{B}_M}$ different values. 



As shown in Section~\ref{subsec:PWC} we observe empirically that $Z^*_b$ follows a Laplacian distribution (Fig.\;\ref{fig:Z}). Thus, for a large block group $\mathcal{B}_M$, $Z^*_{\mathcal{B}_M} = \max_{b\in\mathcal{B}_M} Z^*_b$ will be in general much larger than $Z^*_b$. Since $Z^*_b \geq Z_b$, in practice for many blocks $b$ it is thus possible to reconstruct target $X^0_b$ since $W^*_{\mathcal{B}_M} > 2 Z_b + 2$. Thus, we propose to select $W_{\mathcal{B}_M} = Z^*_{\mathcal{B}_M} + 1$, which guarantees that for the worst case block all SI values are in the same interval, with appropriate choice of $c_b$ to be discussed next. 


\subsection{RD-optimal Selection of Shifts}
\label{subsec:RDoptShift}


Given a chosen $W_{\mathcal{B}_M}$, according to Fact~\ref{fact:feasible} there will be multiple values of $c_b$ that guarantee identical reconstruction for all $X^n_b$.  To enable efficient entropy coding of $c_b$, it is desirable to have a skewed probability distribution $P(c_b)$ of $c_b$. 
We design an algorithm to promote a skewed $P(c_b)$ iteratively. We first propose how to initialize $P(c_b)$, and then discuss how to update $P(c_b)$ in subsequent iterations.


We optimize shift $c_b$ via the following RD cost function:
\begin{equation}
\min_{0 \leq c_b < W_{\mathcal{B}_M} \,|\, c_b \in \mathcal{F}_b} d_b + \lambda (- \log P(c_b)) ,
\label{eq:RDopt}
\end{equation}
where the rate term is approximated as the negative log of the probability $P(c_b)$ of candidate $c_b$, and $d_b$ is the distortion term computed using (\ref{eq:distort}). 
%
The difficulty in using objective (\ref{eq:RDopt}) to compute optimal $c^*_b$ lies in how to define $P(c_b)$ \textit{prior} to selection of $c_b$. Our strategy is to initialize a skewed distribution $P(c_b)$ to promote a low coding rate, perform optimization (\ref{eq:RDopt}) for each block $b \in \mathcal{B}_M$, then update $P(c_b)$ based on statistics of the selected $c_b$'s, and repeat until $P(c_b)$ converges.



In order to choose an initial distribution $P(c_b)$, we note 
that a distribution with a small number of spikes has lower entropy than a smooth  distribution (see Fig.~\ref{fig:ProbC} as an example). 
Choosing $c_b$ values following such a discrete distribution (\textit{e.g.}, left in Fig.~\ref{fig:ProbC}) means that we reduce the number of possible $c_b$, which may increase $d_b$. Thus, if $\lambda$ in (\ref{eq:RDopt}) is small, in order to reduce distortion one can increase the number of spikes in $P(c_b)$. In this paper, we propose to induce a multi-spike probability $P(c_b)$, where the appropriate number of spikes depends on the desired tradeoff between distortion and rate in (\ref{eq:RDopt}).

\begin{figure}[htb]
\begin{minipage}[b]{0.45\linewidth}
  \centering
  \centerline{\includegraphics[width=45mm]{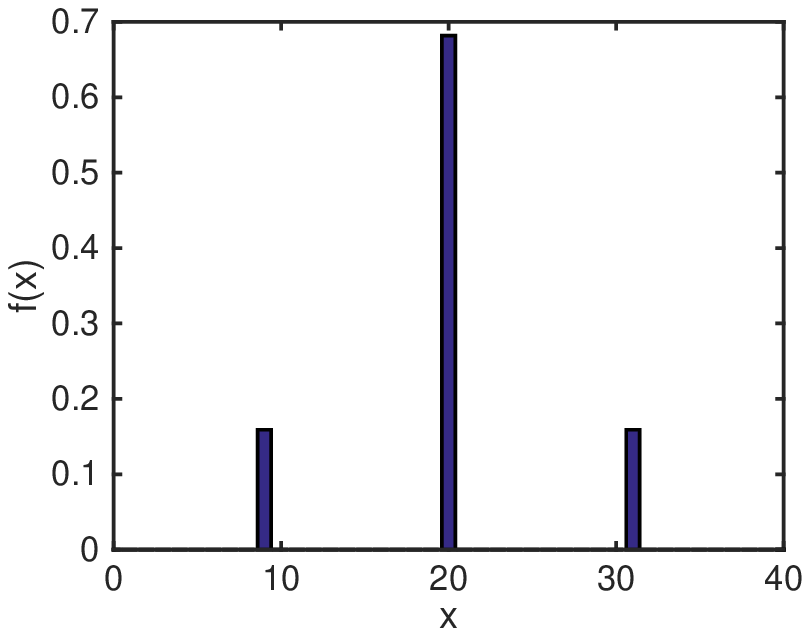}}\medskip
\end{minipage}
\hfill
\begin{minipage}[b]{0.45\linewidth}
  \centering
  \centerline{\includegraphics[width=45mm]{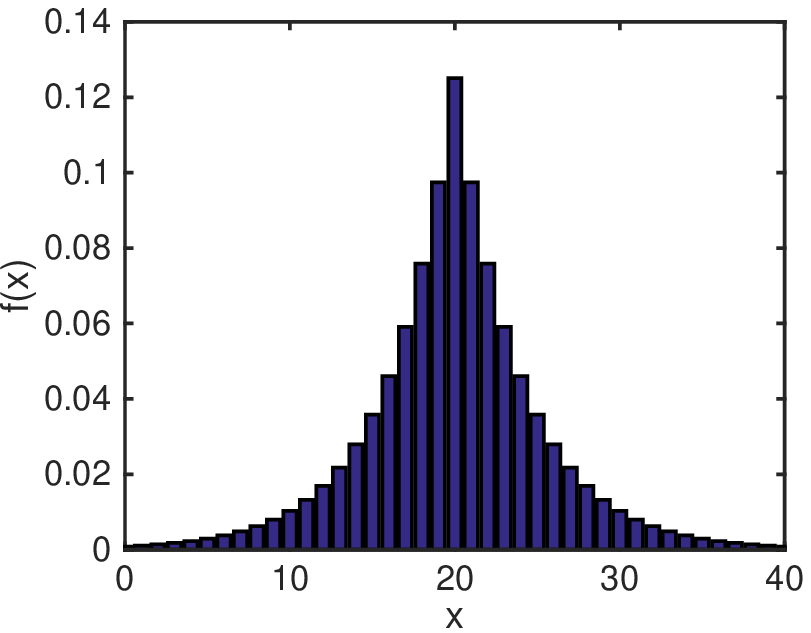}}\medskip
\end{minipage}
\vspace{-0.2in}
\caption{Two examples of shift distribution $P(c_b)$. Left distribution has small number of spikes and has low entropy (1.22). Right distribution is smooth but has high entropy (4.38).}
\label{fig:ProbC}
\end{figure}

Since $c_b$ is constrained to be in the feasible region $\mathcal{F}_b$ defined in Fact~\ref{fact:feasible}, it is possible that when we restrict $c_b$ to just a few values as in Fig.~\ref{fig:ProbC}\,(left), there will be some blocks $b$ for which none of the ``spikes'' in $P(c_b)$ fall within their $\mathcal{F}_b$. In order to guarantee identical reconstruction they must be able to select non-spike values as shifts $c_b$. Thus we propose a ``spike + uniform'' distribution $P(c_b)$:
\begin{equation}
    P(c_b) = \left\{
    \begin{array}{ll}
        ~~ p^s_i	&	\mbox{if} ~ c_b = c^s_i \\
        ~~ p_c	&	\mbox{o.w.}
    \end{array} \right.
\label{eq:Pcb}
\end{equation}
where $\{c^s_1,\ldots, c^s_H\}$ are the $H$ spikes, each with probability $p^s_i$, and $p_c$ is a small constant for non-spike shift values. $p_c$ is chosen so that $P(c_b)$ sums to $1$.

\subsubsection{Computing distribution $P(c_b)$ for fixed $H$}

We now discuss how we compute $P(c_b)$ for given $H$. Empirically we observe that for a reasonable number of spikes (\textit{e.g.}, $H \geq 3$), the majority of blocks (typically $99\%$ or more) in $\mathcal{B}_M$ have at least one spike in their feasible region $\mathcal{F}_b$. Thus, to simplify our computation we first ignore the feasibility constraint and employ an iterative \textit{rate-constrained Lloyd-Max} algorithm (rc-LM)~\cite{sullivan1996efficient} to identify spike locations.

We illustrate the operations of rc-LM to initialize $H$ spike locations for $H=3$ as follows. Let $c^o_b$ be the shift value that minimizes \textit{only} distortion for block $b$. Let $g(c^o)$ be the probability distribution of distortion-minimizing shift $c^o$ for blocks in $\mathcal{B}_m$, where $0 \leq c^o < W_{\mathcal{B}_M}$. $g(c^o)$ can be computed empirically for group $\mathcal{B}_m$. 
Without loss of generality, we define quantization bins for the three spikes $c^s_1$, $c^s_2$ and $c^s_3$ as $[0, b_1)$, $[b_1, b_2)$ and $[b_2, W_{\mathcal{B}_M})$ respectively. The expected distortion $D(\{c^s_i\})$ given three spikes is:
\begin{equation}
\sum\limits_{c^o = 0}^{b_1 - 1} |c^o-c^s_1|^2 g(c^o) 
+ \sum\limits_{c^o = b_1}^{b_2 - 1} |c^o-c^s_2|^2 g(c^o) 
+ \sum\limits_{c^o = b_2}^{W_{\mathcal{B}_M}-1} |c^o-c^s_2|^2 g(c^o)
\end{equation}
where $D(\{c^s_i\})$ is computed as the sum of squared difference between $c^o$ and spike $c^s_i$ in the bin that $c^o$ is assigned to. Having defined distortion $D(\{c^s_i\})$, the initial spike locations $c^s_i$ given $H$ spikes can be found as follows: i) construct $H$ spikes evenly spaced in the interval $[0, W_{\mathcal{B}_M})$, ii) use conventional Lloyd-Max algorithm with no rate constraints to converge to a set of $H$ bin centroids $c^s_i$.

Next, adding consideration for rate, the RD cost of the three spikes can then be written as:

\vspace{-0.1in}
\begin{small}
\begin{equation}
\begin{array}{l}
D(\{c^s_i\}) 
+ \lambda \left( -\log (\sum\limits_{c^o=0}^{b_1-1} g(c^o)) 
-\log (\sum\limits_{c^o=b_1}^{b_2-1} g(c^o))
-\log (\sum\limits_{c^o=b_2}^{W_{\mathcal{B}_M}-1} g(c^o))  \right)
\end{array}
\label{eq:rdSpike}
\end{equation}
\end{small}\noindent
(\ref{eq:rdSpike}) is essentially the aggregate of RD costs (\ref{eq:RDopt}) for all blocks in $\mathcal{B}_M$. 

To minimize (\ref{eq:rdSpike}), rc-LM alternately optimizes bin boundaries $b_i$ and spike locations $c^s_i$ at a time until convergence. Given spikes $c^s_i$ are fixed, each bin boundary $b_i$ is optimized via exhaustive search in the range $[c^s_i, c^s_{i+1})$ to minimize both rate and distortion in (\ref{eq:rdSpike}). Given bin boundaries $b_i$ are fixed, optimal $c^s_i$ can be computed simply as the bin average:
\begin{equation}
c^s_i = \frac{\sum_{c^o = b_i}^{b_{i+1}-1} g(c^o) c^o}{\sum_{c^o=b_i}^{b_{i+1}-1} g(c^o)}
\label{eq:centroid}
\end{equation}
where $b_0 = 0$ and $b_3 = W_{\mathcal{B}_M}$. 

Upon convergence, we can then identify the small fraction of blocks with no spikes in their feasible regions $\mathcal{F}_b$ and assign an appropriate constant $p_c$ so that $P(c_b)$ is well defined according to (\ref{eq:Pcb}). Computing $P(c_b)$ with $H$ spikes where $H \neq 3$ can be done similarly.




\subsubsection{Finding the optimal $P(c_b)$}
\label{subsubsec:prob}

To find the optimal $P(c_b)$, we add an outer loop for this $P(c_b)$ construction procedure to search for the optimal number of spikes $H$. Pseudo-code of the complete algorithm is shown in Algorithm \ref{alg:ProbC}. We note that in practice, we observe that the number of iterations until convergence is small.

\begin{algorithm}[htb]
	\caption{Computing the optimal shift distribution $P(c_b)$}
	\label{alg:ProbC}
	\begin{algorithmic}[1]
		\FOR{each number of spikes $H \in [1, W_{\mathcal{B}_M}]$}
			\STATE Initialize distribution $P^o(c_b)$ via LM;
			\STATE $t = 0$;
			\REPEAT
				\STATE $t = t + 1$;
				\STATE Update $H$ spike locations $c^s_i$ via (\ref{eq:centroid});
				\STATE Update bin boundaries $b_i$ by minimizing (\ref{eq:rdSpike});
				\STATE Compute $p_c$ for a new $P^t(c_b)$;
			\UNTIL{ $\| P^{t-1}(c_b) -  P^{t}(c_b) \| \leq \epsilon$  }
		\ENDFOR
	\end{algorithmic}
\end{algorithm}





\subsection{Comparison with Coset Coding}
\label{subsec:coset}


We now discuss the similarity between our proposed approaches and coset coding methods in DSC~\cite{pradhan:03}. 
Consider first fixed target merging of one q-coeff of a single block $b$. In a scalar implementation of coset coding, given possible SI values $X^n_b, n \in \{1, \ldots, N\}$, seen as ``noisy'' versions of a target $X^0_b$, the largest difference $Z_{b} = \max_{n} | X^n_b - X^0_b|$ with respect to $X^0_b$ is first computed. The size of the coset $W$ is then selected such that $W > 2 Z_{b}$. The coset index $i_b = X^0_b \bmod W$ is computed at the encoder for transmission. 

At the decoder, the reconstructed value $\hat{X}_b$ is the integer closest to received SI $X^n_b$ with the same coset index $i_b$, \textit{i.e.},
\begin{equation}
\hat{X}_b = \arg \min_{X \in \mathbb{Z}} |X^n_b - X| 
~~~ \mbox{s.t.} ~~ i_b = X \bmod W
\label{eq:coset}
\end{equation}

Using the aforementioned coset coding scheme for blocks $b \in \mathcal{B}_M$, coding of $i_b = X^0_b \bmod W = X^0_{b,2}$ per block is necessary, where coset size $W$ is chosen such that $W > 2 Z_{\mathcal{B}_M}$. In our fixed target merging scheme using PWC functions, we code a shift $c_b = W^{\#}_{\mathcal{B}_M}/2 - X^0_{b,2}$ for each block $b$, where step size $W^{\#}_{\mathcal{B}_M}$ is also proportional to $2 Z_{\mathcal{B}_M}$. Comparing the two schemes one can see that the number of choices that need to be sent to the decoder is the same (one of $W^{\#}_{\mathcal{B}_M}$ possible values in both cases). Both the shift value $c_b$ and $i_b$ are functions of $X^0_{b,2}$, the LSBs of $X^0_b$, which are likely to have an approximately uniform distribution. Thus so the overhead rate should be the same for both coset coding and fixed target merging. 

Consider now the optimized merging case. In this scenario we are able to choose $W_{\mathcal{B}_M} = Z^*_{\mathcal{B}_M} + 1$---likely much smaller than $2 Z_{\mathcal{B}_M} \leq 2 Z^*_{\mathcal{B}_M}$---so that we can still guarantee identical reconstruction, with a reduction in rate that comes at the cost of an increase in distortion. As for the coset coding approach, if we were to reduce to choose a smaller 
$W_{\mathcal{B}_M}$ as well, we in fact can no longer guarantee identical reconstruction. This is because when  
$W_{\mathcal{B}_M} < 2 Z_{\mathcal{B}_M}$ there will be cases where not all the $X^n_b$ are in the same interval, and thus the same $i_b$ will lead to two different values at the decoder depending on the SI received. This imperfect merging will lead to undesirable coding drift in the following predicted frames, as discussed in Section \ref{sec:system}.

\section{Experiments}
\label{sec:results}
We first discuss the general experimental setup and M-frame parameter selection (Section~\ref{subsec:ExperiSet}). We then verify the effectiveness of our proposed ``Spike + Uniform" distribution (Section~\ref{subsec:verify}). Next, we compare the performance of our M-frame in three different situations: 1) static view switching ({\em Scenario 1}  in Section~\ref{subsec:S1}); 2) switching among streams of different rates for the same single-view video ({\em Scenario 2} in Section~\ref{subsec:S2}), and 3) dynamic view switching of multi-view videos of different viewpoints and encoded in the same bit-rate ({\em Scenario 3} in Section~\ref{subsec:S3}). 

\subsection{Experimental Setup}
\label{subsec:ExperiSet}

We use four different multiview video test sequences with resolution 1024x768 for scenarios 1 and 3: \texttt{Balloons}, \texttt{Kendo}\footnote{http://www.tanimoto.nuee.nagoya-u.ac.jp/mpeg/mpeg\_ftv.html}, \texttt{Lovebird1} and \texttt{Newspaper}\footnote{ftp://203.253.128.142}. The viewpoints of each sequence are shown in Table~\ref{tab:SVS}. For scenario 2, we use four single-view video sequences with resolution 1920x1080: \texttt{BasketballDrive}, \texttt{Cactus}, \texttt{Kimono1} and \texttt{ParkScene}\footnote{ftp://ftp.tnt.uni-hannover.de/testsequences/}.

\begin{table}[htb]
\begin{small}
\begin{center}
\renewcommand{\arraystretch}{1.2}
\renewcommand{\multirowsetup}{\centering}
\caption{Viewpoints of each multiview sequences.}\label{tab:SVS}
\begin{tabular}{|c|c|}
\hline
Sequence Name	&   Viewpoints  \\  \hline
Balloons		&	1, 3, 5     \\  \hline
Kendo			&	1, 3, 5     \\  \hline
Lovebird1		&	4, 6, 8     \\  \hline
Newspaper		&	3, 4, 5     \\  \hline
\end{tabular}
\end{center}
\end{small}
\end{table}

We compare the coding performance of our proposed scheme against two schemes\footnote{Here  $QP_{A}$ denotes the quantization parameter for coding DCT coefficients in approach $A$}: SP-frame~\cite{karczewicz03} in H.264 and D-frame proposed in~\cite{cheung2010rate}. $QP$ for D-frame is set to be equal to $QP_{SI}$ to maintain consistent quality. For multi-view scenarios 1 and 3, we encoded three streams from three viewpoints: the center view was set as the target, to which the other two side views can switch at a defined switching point. For Scenario 2, we encoded the single-view video in three different bit-rates and then switched among them. The bit-rates for the three streams were decided according to \textit{additive increase multiple decrease} (AIMD) rate control behavior in TCP and TFRC~\cite{chiu1989analysis}: one stream has twice the target stream's bit-rate, while the other has slightly smaller bit-rate (0.9 times of the target stream's bit-rate). The results are shown in plots of PSNR versus coding rate for a switched frame.

M-frame parameters are selected as follows. In Scenario 1, different $QP_{M}$ will result in different rates, and so we set $QP_{M}$ to equal to $QP_{SI}$, as was done for D-frames. However, for optimized target merging, coding rate is determined mainly by the number of spikes in the distribution, and not $QP_{M}$. 
In our experiments, as similarly done in High Efficiency Video Coding (HEVC), we first empirically compute $\lambda$ as a function of the SI frame's $QP_{SI}$:
\begin{equation}
	\lambda = 2^{0.6QP_{SI}-12} \text{,}
\end{equation}
The number of spikes in the distribution is driven by the selected $\lambda$. We then set $QP_{M} = 1$ to maintain small quantization error.
For mode selection among \textit{skip}, \textit{intra} and \textit{merge}, for each block $b$ we first examine q-coeffs $X^n_b(k)$ of $N$ SI frames. If $X^n_b(k)$ of all $K$ frequencies are identical across the SI frames, then block $b$ is coded as \textit{skip}. Otherwise, selection between \textit{intra} and \textit{merge} is done based on a RD criteria.


In HEVC, large code block sizes are introduced which bring significant coding gain on high resolution sequences~\cite{sullivan2012overview}. Motivated by this observation, we also investigated the effect of different block sizes ($4 \times 4$, $8 \times 8$, $16 \times 16$) on coding performance. We also compare our current proposal against the performance of our previous work~\cite{dai13}, where block size is fixed at $8\times 8$, initial probability distribution of shift $P(c_b)$ is not optimized, and no RD-optimized EOB flag is employed. The corresponding PSNR-bitrate curves for scenario 3 are shown in Fig.\;\ref{fig:ESB}.

\begin{figure}[htb]
	\begin{minipage}[b]{.45\linewidth}
		\centering
		\centerline{\includegraphics[width=4.8cm]{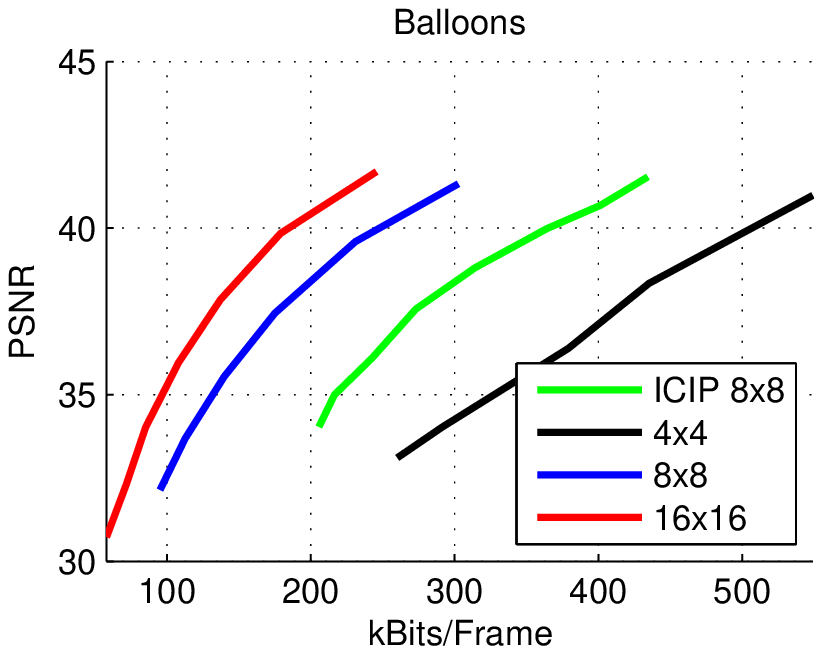}}
		\centerline{(a) \texttt{Balloons}}\medskip
	\end{minipage}
	\hfill
	\begin{minipage}[b]{.45\linewidth}
		\centering
		\centerline{\includegraphics[width=4.8cm]{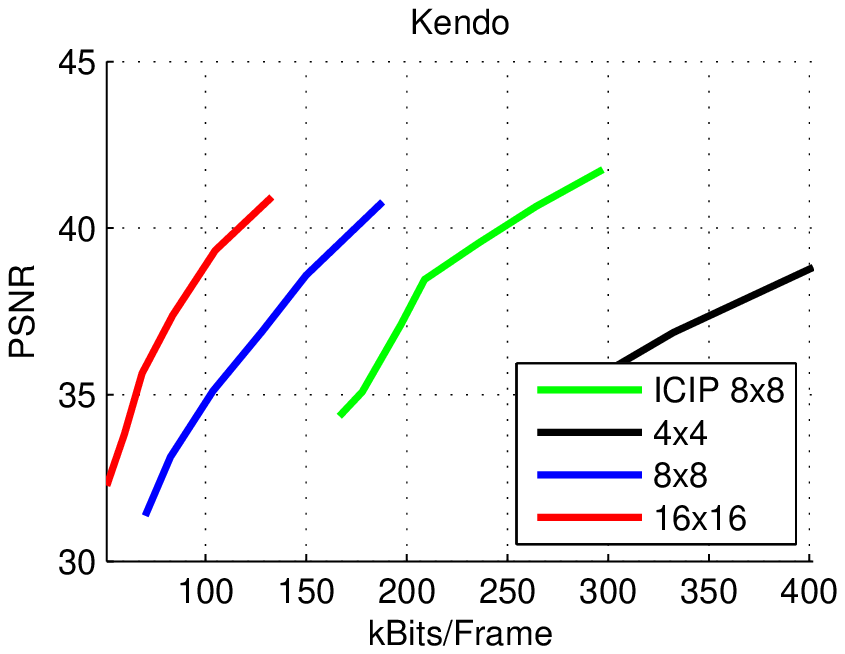}}
		\centerline{(b) \texttt{Kendo}}\medskip
	\end{minipage}
	\begin{minipage}[b]{.45\linewidth}
		\centering
		\centerline{\includegraphics[width=4.8cm]{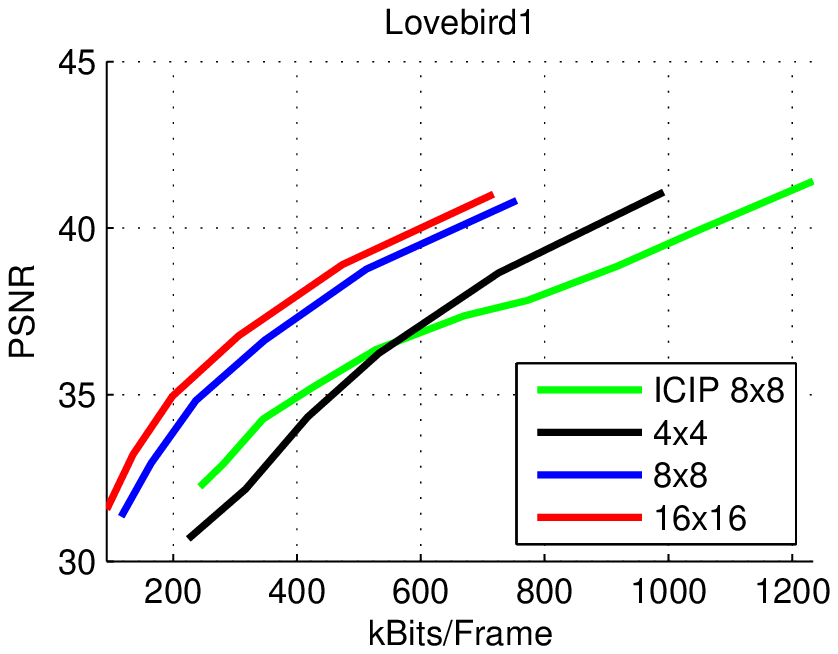}}
		\centerline{(c) \texttt{Lovebird1}}\medskip
	\end{minipage}
	\hfill
	\begin{minipage}[b]{.45\linewidth}
		\centering
		\centerline{\includegraphics[width=4.8cm]{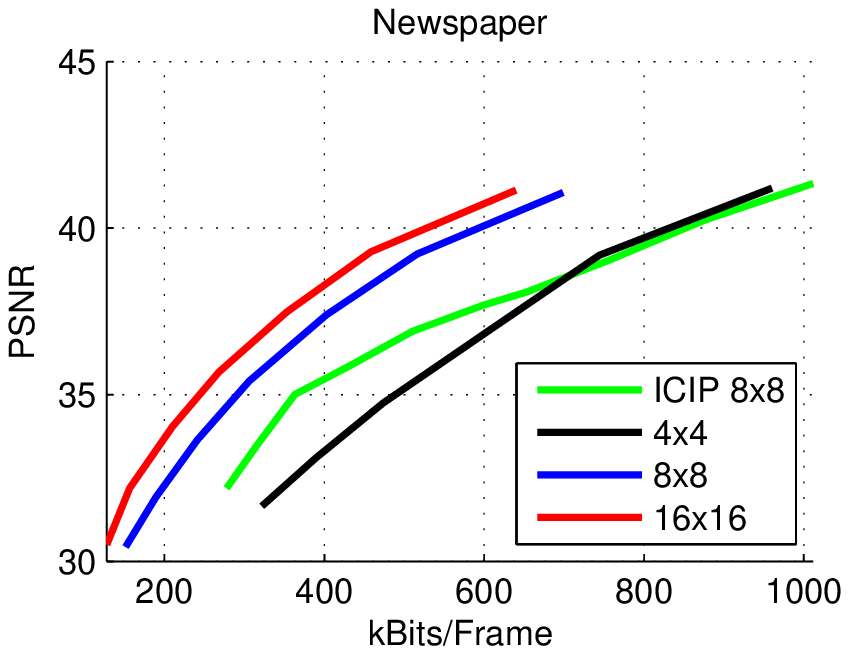}}
		\centerline{(d) \texttt{Newspaper}}\medskip
	\end{minipage}
	\vspace{-0.15in}
	\caption{PSNR v.s. encoding rate comparison with different block sizes for sequences \texttt{Balloons}, \texttt{Kendo}, \texttt{Lovebird1} and \texttt{Newspaper}.}
	\label{fig:ESB}
\end{figure}

From Fig.\;\ref{fig:ESB}, we observe that block size $16 \times 16$ provides the best coding performance at all bit-rates. One reason for the superior performance of large blocks in M-frame is the following: because SI frames are already reconstructions of the target frames (albeit slightly different), motion compensation is not necessary, so the benefit of smaller blocks typical in video coding is diminished.
We note that in general an optimal block size per frame can be selected by the encoder \textit{a priori} and encoded as side information to inform the decoder. In the following experiments, the block size will be fixed at $16 \times 16$ for best performance.

Further, we observe also that our proposed method achieves a significant coding performance gain compared to our previous method in \cite{dai13} over all bit-rate regions, showing the effectiveness of our newly proposed optimization techniques.

\subsection{Effectiveness of ``Spike + Uniform'' Distribution}
\label{subsec:verify}

In order to verify the effectiveness of our proposed ``Spike + Uniform'' (\texttt{SpU}) probability distribution $P(c_b)$ for shift parameter $c_b$, we choose a competing na\"ive distribution for $P(c_b)$ as follows: first, we compute distortion-minimizing $g(c^0)$ as the initial probability distribution. Next, we compute the RD-optimal $c_b$ for each block $b \in \mathcal{B}_M$ via (\ref{eq:RDopt}) for a single iteration using the initialized probability distribution and compute a new $P^\prime(c_b)$. This $P^\prime(c_b)$ is then used to compute the rate to encode each $c_b$ of a merge block $b$. The difference between $P^\prime(c_b)$ and our proposed $P^t(c_b)$ is that $P^\prime(c_b)$ in general is an arbitrarily shaped distribution, not a skewed ``spiky" distribution. Experimental results of M-frame using these distributions are shown in Fig.\;\ref{fig:Naive}.

\begin{figure}[htb]
	\begin{minipage}[b]{.45\linewidth}
		\centering
		\centerline{\includegraphics[width=4.8cm]{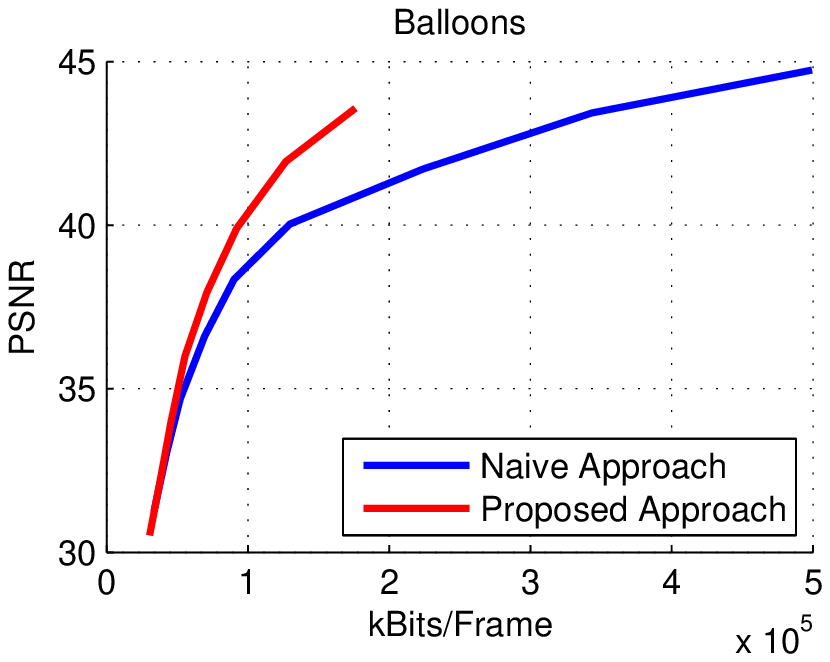}}
		\centerline{(a) \texttt{Balloons}}\medskip
	\end{minipage}
	\hfill
	\begin{minipage}[b]{.45\linewidth}
		\centering
		\centerline{\includegraphics[width=4.8cm]{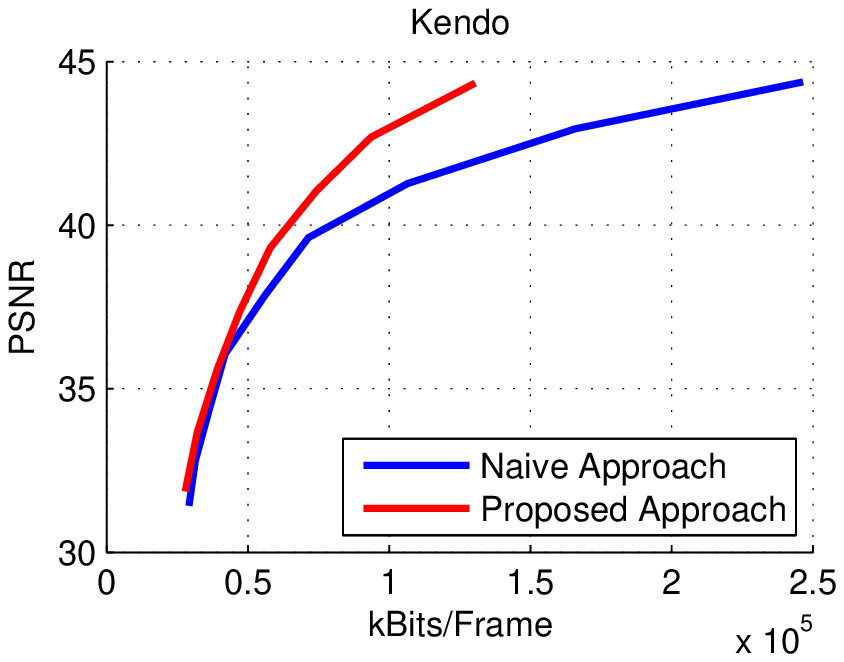}}
		\centerline{(b) \texttt{Kendo}}\medskip
	\end{minipage}
	\vspace{-0.15in}
	\caption{PSNR v.s. encoding rate comparison with different block sizes for sequences \texttt{Balloons}, \texttt{Kendo}.}
	\label{fig:Naive}
\end{figure}

We observe from Fig.\;\ref{fig:Naive} that our proposed \texttt{SpU} distribution outperforms the na\"ive distribution in the high bit-rate region and is comparable in the low bit-rate region. This is because in the low bit-rate region $\lambda$ is very large, so that for any initial distribution, after one iteration, there will only remain one spike, and the number of iterations required for convergence is very small.

\subsection{Scenario 1: Static View Switching}
\label{subsec:S1}

We first test our proposed M-frame in the static view switching scenario for multi-view sequences. Three views are encoded using same $QP$. The fixed target merging algorithm described in Section~\ref{sec:target} is used to facilitate switching to neighboring views among pictures of the same instant, as shown in Fig.~\ref{fig:SVS}. 


Specifically, we constructed M- / D- frames to enable static view-switching from view 1 or 3 to target view 2. We first use H.264 to encode two SI frames (P-frames) using $\Pi_{2, 2}$ as the target and $\Pi_{1, 2}$ and $\Pi_{3, 2}$ as predictors, respectively. This results in encoded rates $\mathcal{R}_{1, 2}$ and $\mathcal{R}_{2, 2}$ for the two SI frames, respectively. Then we encoded a M- / D- frame to merge these two SI frames identically to $\Pi_{2, 2}$. The corresponding rates for M-frame and D-frame are $\mathcal{R}^M_{2,2}$ and $\mathcal{R}^D_{2,2}$, respectively. Since SP-frame in H.264 cannot perform fixed target merging, it is not tested in this scenario.

We assume that the switching probability is equal on both view 1 and 3, which is 0.5. Then the overall rate for the D-frame is calculated as:
\begin{equation}
\mathcal{R}^{D} = \frac{\mathcal{R}_{1, 2} + \mathcal{R}_{3, 2}}{2} + \mathcal{R}^D_{2, 2} \text{.}
\end{equation}

Also, the overall rate for our proposed M-frame using fixed target merging scheme is calculated as:
\begin{equation}
\mathcal{R}^{M} = \frac{\mathcal{R}_{1, 2} + \mathcal{R}_{3, 2}}{2} + \mathcal{R}^M_{2, 2} \text{.}
\end{equation}

\begin{figure}[htb]
\begin{minipage}[b]{.45\linewidth}
  \centering
  \centerline{\includegraphics[width=4.8cm]{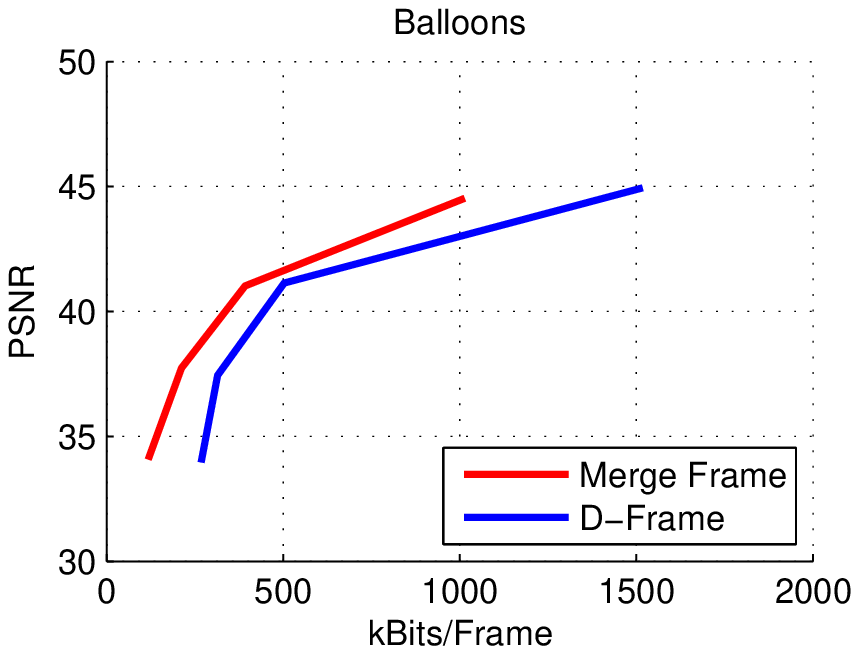}}
  \centerline{(a) \texttt{Balloons}}\medskip
\end{minipage}
\hfill
\begin{minipage}[b]{.45\linewidth}
  \centering
  \centerline{\includegraphics[width=4.8cm]{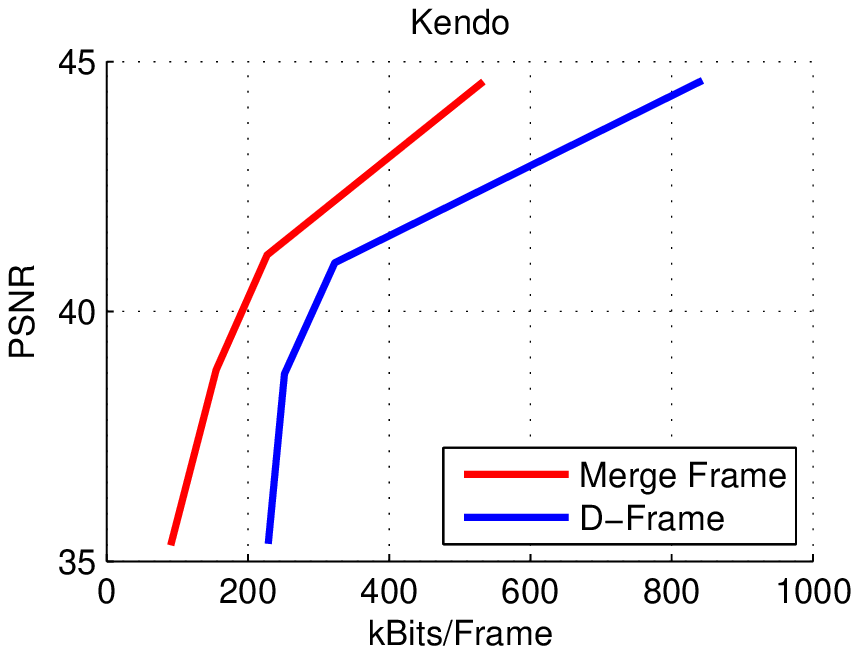}}
  \centerline{(b) \texttt{Kendo}}\medskip
\end{minipage}
\begin{minipage}[b]{.45\linewidth}
  \centering
  \centerline{\includegraphics[width=4.8cm]{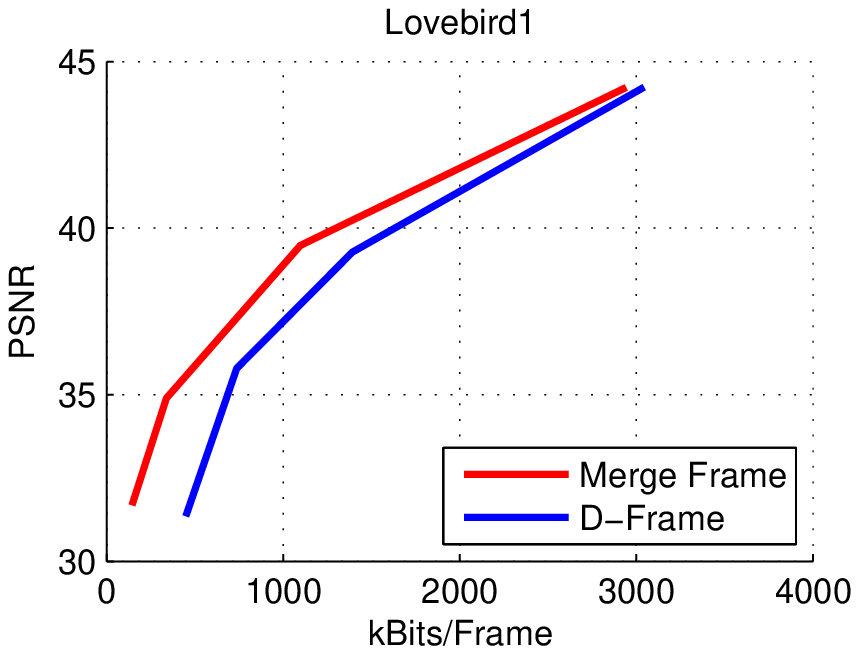}}
  \centerline{(c) \texttt{Lovebird1}}\medskip
\end{minipage}
\hfill
\begin{minipage}[b]{.45\linewidth}
  \centering
  \centerline{\includegraphics[width=4.8cm]{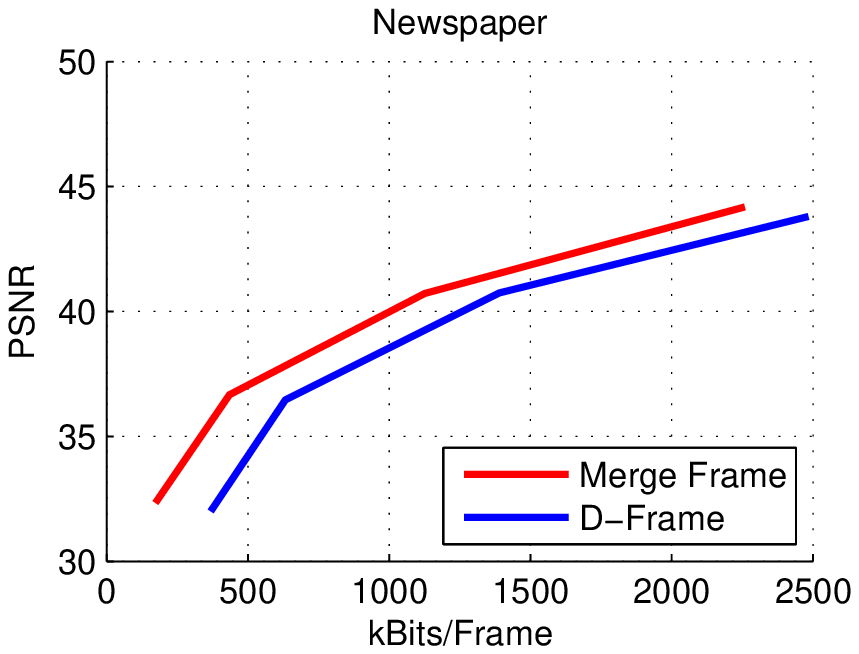}}
  \centerline{(d) \texttt{Newspaper}}\medskip
\end{minipage}
\vspace{-0.15in}
\caption{PSNR v.s. encoding rate comparing proposed M-frame using fixed target merging scheme with D-frame for sequences \texttt{Balloons}, \texttt{Kendo}, \texttt{Lovebird1} and \texttt{Newspaper} in static view switching scenario.}
\label{fig:S1}
\end{figure}

\begin{table}[htb]
	\begin{center}
		\renewcommand{\arraystretch}{1.2}
		\renewcommand{\multirowsetup}{\centering}
		\caption{BD-rate reduction of proposed M-frame using fixed target merging scheme compared to D-frame in static view switching scenario.}\label{tab:S1MD}
		\begin{tabular}{|c|c|}
			\hline
			Sequence Name	&   M-frame \emph{vs.} D-frame	\\  \hline
			Balloons		&	-31.7\%					\\	\hline
			Kendo			&	-40.1\%					\\	\hline
			Lovebird1		&	-35.7\%					\\	\hline
			Newspaper		&	-31.1\%					\\	\hline
		\end{tabular}
	\end{center}
\end{table}

The coding results are shown in Fig.~\ref{fig:S1} and BD-rate~\cite{bjontegaard2008improvements} comparison can be found in Table~\ref{tab:S1MD}. We observe from Table~\ref{tab:S1MD} that our proposed M-frame using fixed target merging scheme achieved up to 40.1\% BD-rate reduction compared to D-frame. Further, from Fig.\;\ref{fig:S1} we observe that our M-frame is better than D-frame in all bit-rate regions, especially in low and high bit-rate region, mainly  
due to the skip block and EOB flag tools. 
In high bit-rate region, due to the small distortion in SI frames, more blocks will be classified into skip block, which efficiently reduces the bits to encode the M-frame,  while in low bit-rate region more coefficients are set to zero and skipped due to the EOB flag. This shows the effectiveness of our proposed M-frame using fixed target merging scheme compared to the D-frame. 

\subsection{Scenario 2: Bit-rate Adaptation}
\label{subsec:S2}

We next conducted experiments of bitrate adaptation scenario for single-view video sequences. M-frame is encoded in a RD-optimized manner, described in section~\ref{sec:Solving} with the system framework shown in Fig.~\ref{fig:DVS}. Three streams of different rates are encoded according to AIMD rate control behavior. 


We constructed M- / D- frames to enable stream-switching from stream 1, 2 or 3 to target stream 2 under different bit-rates. We first encode three SI frames using $\Pi_{2, 2}$ as target and $\Pi_{1, 1}$, $\Pi_{2, 1}$ and $\Pi_{3, 1}$ as reference respectively. This results in encoded rate $\mathcal{R}_{1, 1}$, $\mathcal{R}_{2, 1}$ and $\mathcal{R}_{3, 1}$ for the three SI frames, respectively. Then we encoded a M- / D-frame to merge these three SI frames into an identical frame. The corresponding rate for M-frame and D-frame are $\mathcal{R}^M_{2,2}$ and $\mathcal{R}^D_{2,2}$, respectively.


We also constructed SP-frames to enable stream-switching from stream 1, 2 or 3 to target stream 2. We first encoded a primary SP-frame using $\Pi_{2, 2}$ as target and $\Pi_{2, 1}$ as reference. We then losslessly encoded two secondary SP-frames using the primary SP-frame as target and $\Pi_{1, 1}$, $\Pi_{3, 1}$ as reference respectively. $\mathcal{R}^S_{2, 1}$ denotes the rate for primary SP-frame while $\mathcal{R}^S_{1, 1}$ and $\mathcal{R}^S_{3, 1}$ denote the rate for two secondary SP-frames.

As measure for transmission rate, we consider both the average and worst case code rate during a stream-switch. For average case, in the absence of application-dependent information, we assume that the probability of stream-switching is equal for all views. Thus, the overall rate for RD optimized M-frame is calculated as: 
\begin{equation}
	\mathcal{R}^{M}_{T_A} = \frac{\mathcal{R}_{1, 1} + \mathcal{R}_{2, 1} + \mathcal{R}_{3, 1}}{3} + \mathcal{R}^M_{2, 2} \text{.}
\end{equation}

The overall rate for D-frame is calculated as:
\begin{equation}
	\mathcal{R}^{D}_{T_A} = \frac{\mathcal{R}_{1, 1} + \mathcal{R}_{2, 1} + \mathcal{R}_{3, 1}}{3} + \mathcal{R}^D_{2, 2} \text{.}
\end{equation}

The overall rate for SP-frame is calculated as:
\begin{equation}
	\mathcal{R}^{SP}_{T_A} = \frac{\mathcal{R}^S_{1, 1} + \mathcal{R}^S_{2, 1} + \mathcal{R}^S_{3, 1}}{3}	\text{.}
\end{equation}

\begin{figure}[htb]
\begin{minipage}[b]{.45\linewidth}
  \centering
  \centerline{\includegraphics[width=4.8cm]{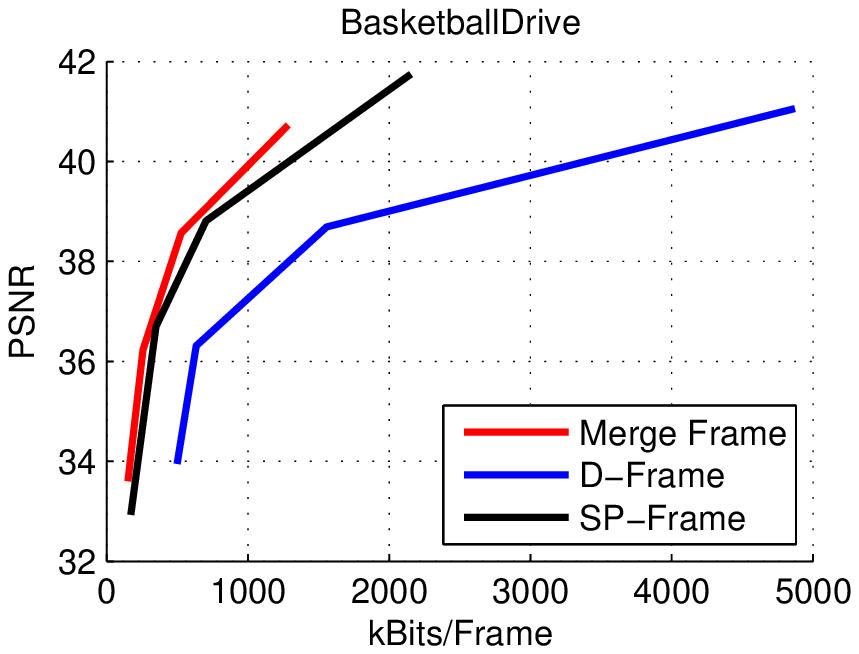}}
  \centerline{(a) \texttt{BasketballDrive}}\medskip
\end{minipage}
\hfill
\begin{minipage}[b]{.45\linewidth}
  \centering
  \centerline{\includegraphics[width=4.8cm]{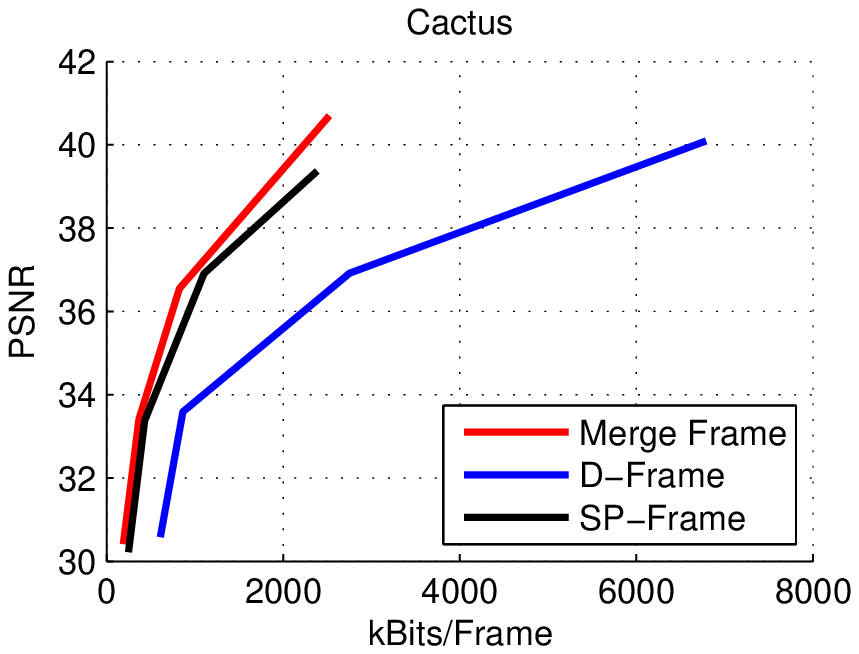}}
  \centerline{(b) \texttt{Cactus}}\medskip
\end{minipage}
\begin{minipage}[b]{.45\linewidth}
  \centering
  \centerline{\includegraphics[width=4.8cm]{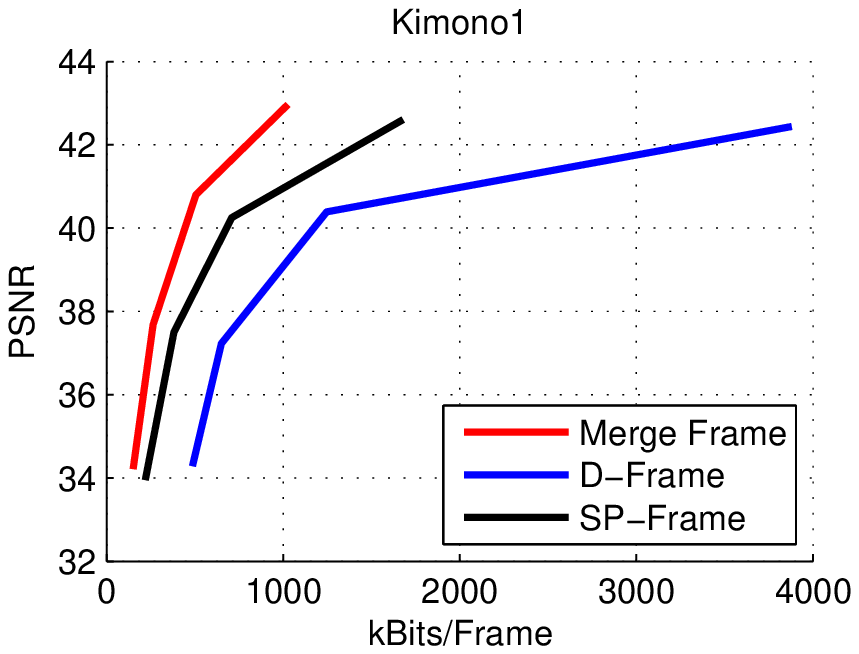}}
  \centerline{(c) \texttt{Kimono1}}\medskip
\end{minipage}
\hfill
\begin{minipage}[b]{.45\linewidth}
  \centering
  \centerline{\includegraphics[width=4.8cm]{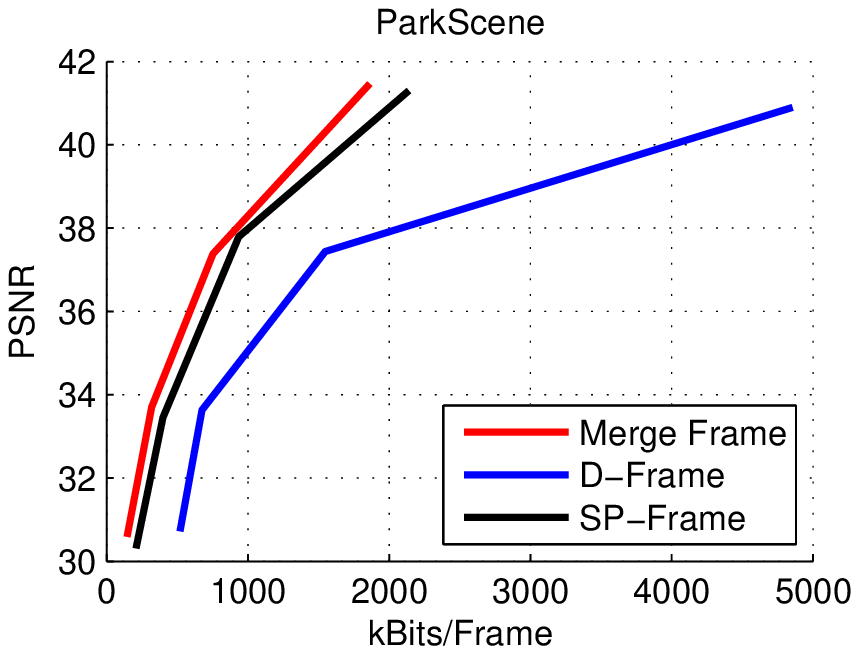}}
  \centerline{(d) \texttt{ParkScene}}\medskip
\end{minipage}
\vspace{-0.15in}
\caption{PSNR versus encoding rate comparing proposed RD-optimized M-frame with D-frame and SP-frame for sequences \texttt{BasketballDrive}, \texttt{Cactus}, \texttt{Kimono1} and \texttt{ParkScene} in average case.}
\label{fig:S4AV}
\end{figure}

\begin{table*}[htb]
	\begin{center}
		\renewcommand{\arraystretch}{1.2}
		\renewcommand{\multirowsetup}{\centering}
		\caption{BD-rate reduction of RD-optimized M-frame compared to D-frame and SP-frame of scenario 2.}\label{tab:S2}
		\begin{tabular}{|c|c|c|c|c|}
			\hline
			\multirow{2}{*}{Sequence Name}	&   \multicolumn{2}{c|}{M-frame \emph{vs.} D-frame}	&	\multicolumn{2}{c|}{M-frame \emph{vs.} SP-frame}	\\  \cline{2-5}
			&	Average Case	&	Worst Case	&	Average Case	&	Worst Case	\\ \hline
			Balloons		&	-63.4\%			&	-63.7\%		&	-17.0\%			&	-39.4\%		\\	\hline
			Kendo			&	-63.5\%			&	-63.2\%		&	-18.8\%			&	-42.1\%		\\	\hline
			Lovebird1		&	-65.6\%			&	-65.4\%		&	-36.3\%			&	-49.9\%		\\	\hline
			Newspaper		&	-56.3\%			&	-56.7\%		&	-19.5\%			&	-43.8\%		\\	\hline
		\end{tabular}
	\end{center}
\end{table*}


The coding results of average case are shown in Fig.~\ref{fig:S4AV} and BD-rate comparison can be found in Table~\ref{tab:S2}. We observe from Table~\ref{tab:S2} that our proposed RD-optimized M-frame achieves up to 65.6\% BD-rate reduction compared to D-frame and 36.3\% BD-rate reduction compared to SP-frame. Moreover, from Fig.\;\ref{fig:S4AV} we observe that our proposed RD-optimized M-frame is better than D-frame and SP-frame in all bit-rate regions. Note that for the SP-frame case, if the switching probability to the primary SP-frame is higher, it will result in a smaller average rate.

For worst case, the code rate for M-frame is calculated as: 
\begin{equation}
\mathcal{R}^{M}_{T_W} = \max(\mathcal{R}_{1, 1}, \mathcal{R}_{2, 1}, \mathcal{R}_{3, 1}) + \mathcal{R}^M_{2, 2} \text{.}
\end{equation}

The rate for D-frame is calculated as:
\begin{equation}
\mathcal{R}^{D}_{T_W} = \max(\mathcal{R}_{1, 1}, \mathcal{R}_{2, 1}, \mathcal{R}_{3, 1}) + \mathcal{R}^D_{2, 2} \text{.}
\end{equation}

The rate for SP-frame is calculated as:
\begin{equation}
\mathcal{R}^{SP}_{T_W} = \max(\mathcal{R}^S_{1, 1}, \mathcal{R}^S_{2, 1}, \mathcal{R}^S_{3, 1})	\text{.}
\end{equation}

\begin{figure}[htb]
\begin{minipage}[b]{.45\linewidth}
  \centering
  \centerline{\includegraphics[width=4.8cm]{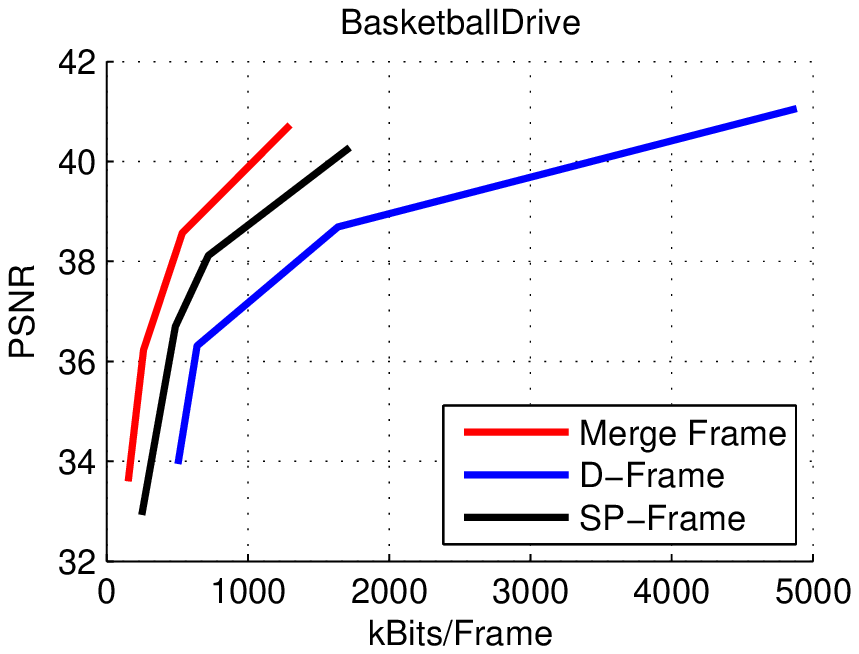}}
  \centerline{(a) \texttt{BasketballDrive}}\medskip
\end{minipage}
\hfill
\begin{minipage}[b]{.45\linewidth}
  \centering
  \centerline{\includegraphics[width=4.8cm]{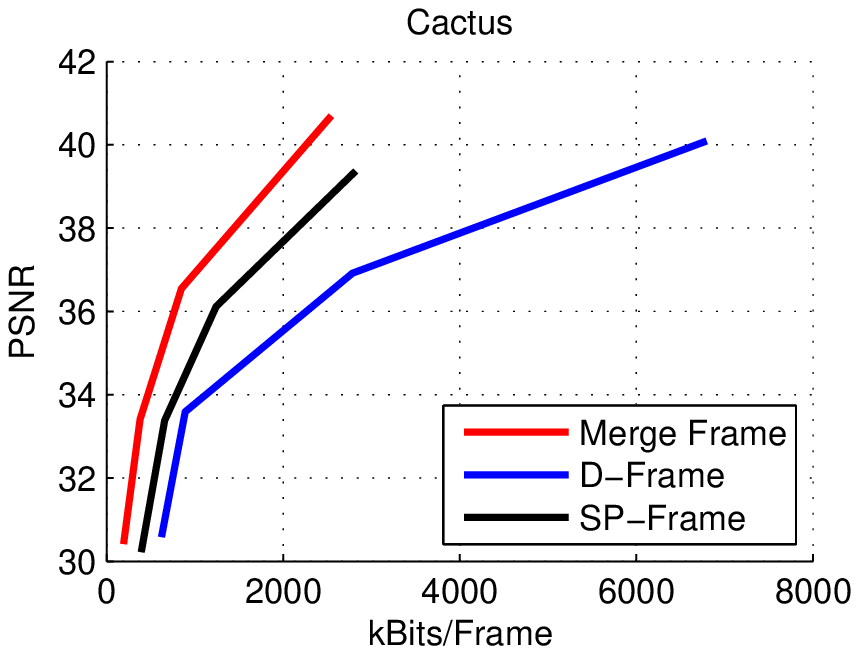}}
  \centerline{(b) \texttt{Cactus}}\medskip
\end{minipage}
\begin{minipage}[b]{.45\linewidth}
  \centering
  \centerline{\includegraphics[width=4.8cm]{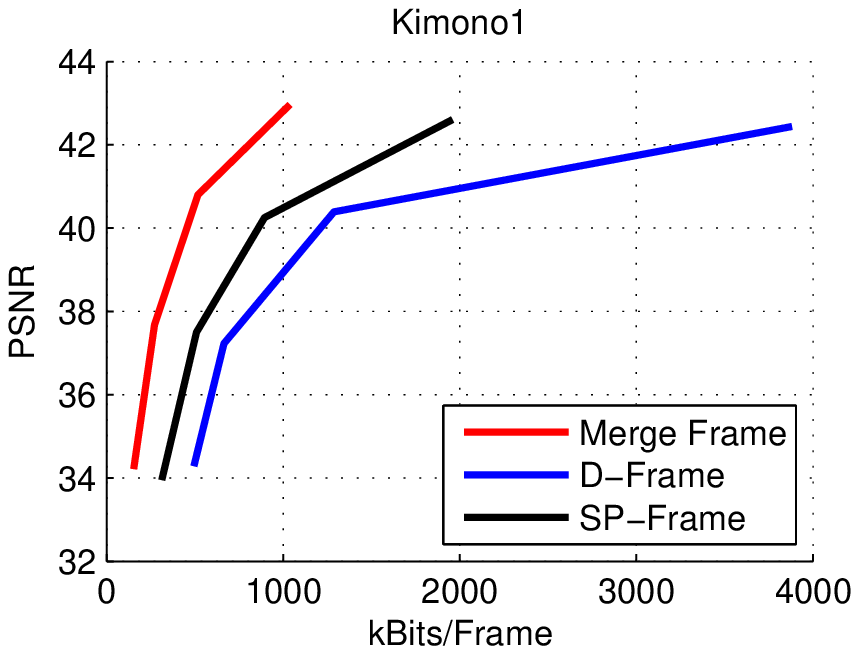}}
  \centerline{(c) \texttt{Kimono1}}\medskip
\end{minipage}
\hfill
\begin{minipage}[b]{.45\linewidth}
  \centering
  \centerline{\includegraphics[width=4.8cm]{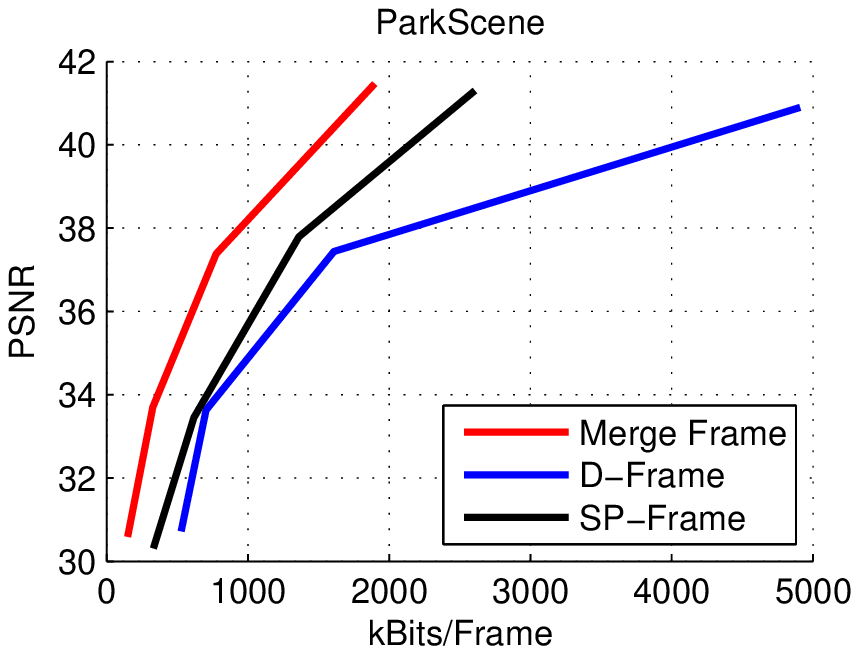}}
  \centerline{(d) \texttt{ParkScene}}\medskip
\end{minipage}
\vspace{-0.15in}
\caption{PSNR versus encoding rate comparing RD-optimized M-frame with D-frame and SP-frame for sequences \texttt{BasketballDrive}, \texttt{Cactus}, \texttt{Kimono1} and \texttt{ParkScene} in worst case.}
\label{fig:S4WS}
\end{figure}


The coding results of worst case are shown in Fig.~\ref{fig:S4WS} and BD-rate comparison can be found in Table~\ref{tab:S2}. We observe from Table~\ref{tab:S2} that our proposed RD-optimized M-frame achieves up to 65.4\% BD-rate reduction compared to D-frame and 49.9\% BD-rate reduction compared to SP-frame.

We observe in Table~\ref{tab:S2} that the performance difference between average and worst case for D-frame is small. However, for SP-frame the performance difference between average and worst case is large. This is due to lossless coding in secondary SP-frames, resulting in a much larger size than primary SP-frame (typically 10 times larger). 


\subsection{Scenario 3: Dynamic View Switching}
\label{subsec:S3}

Finally we conducted experiments of dynamic view switching scenario for multiview video sequences. Three views are encoded using same $QP$. The detailed frame structure for M-frame, D-frame and SP-frame are the same as in Section~\ref{subsec:S2}. Also, the overall rate calculation for average and worst case are identical too.

\begin{figure}[htb]
\begin{minipage}[b]{.45\linewidth}
  \centering
  \centerline{\includegraphics[width=4.8cm]{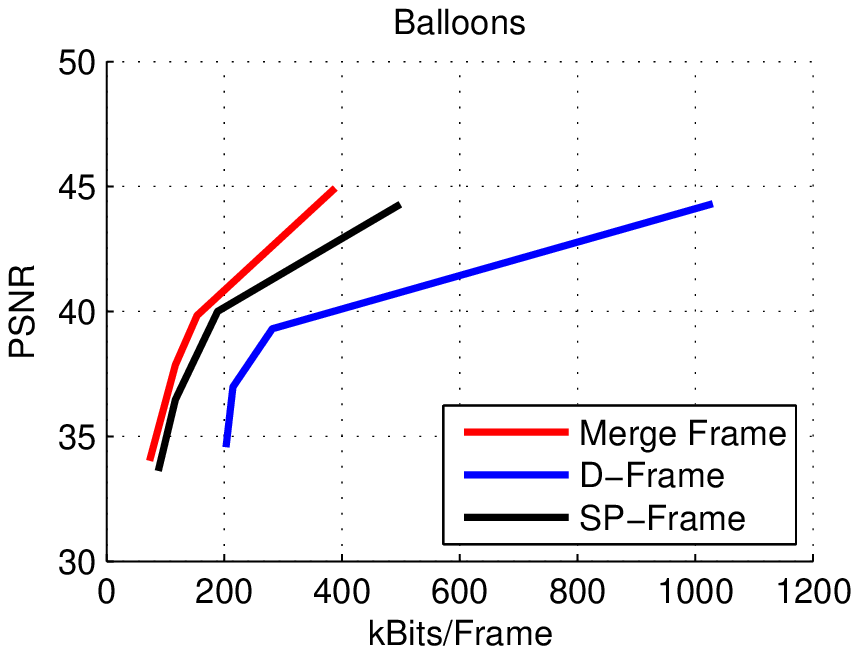}}
  \centerline{(a) \texttt{Balloons}}\medskip
\end{minipage}
\hfill
\begin{minipage}[b]{.45\linewidth}
  \centering
  \centerline{\includegraphics[width=4.8cm]{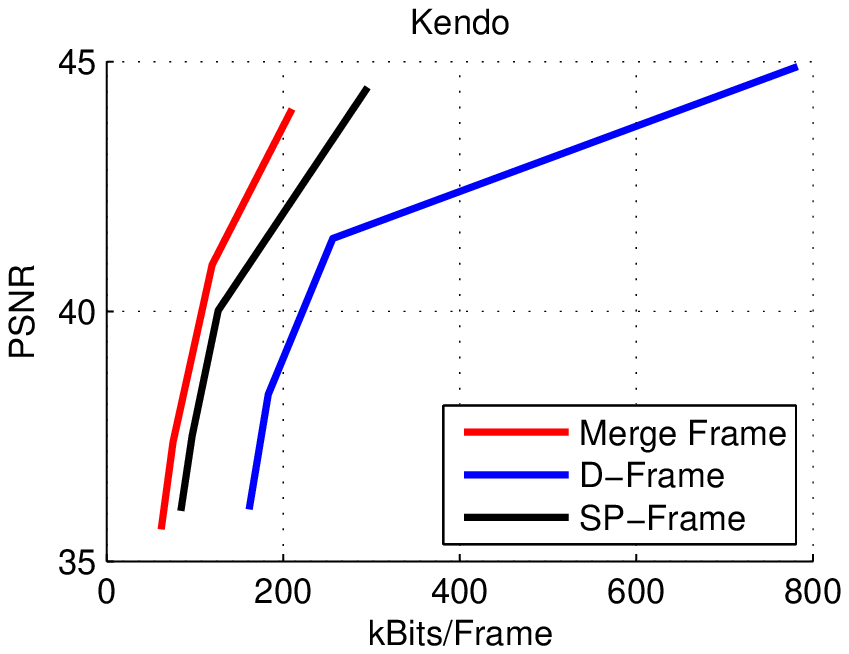}}
  \centerline{(b) \texttt{Kendo}}\medskip
\end{minipage}
\begin{minipage}[b]{.45\linewidth}
  \centering
  \centerline{\includegraphics[width=4.8cm]{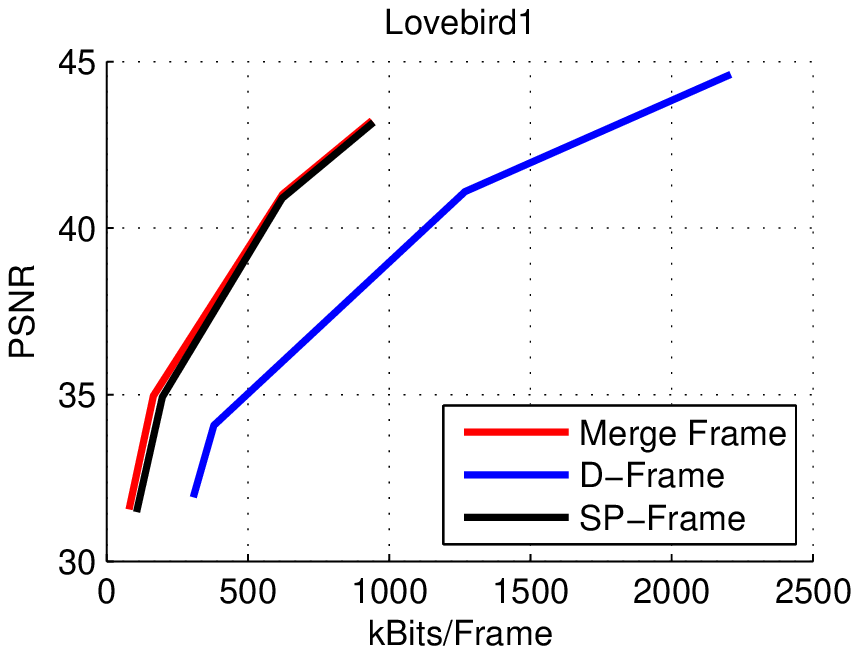}}
  \centerline{(c) \texttt{Lovebird1}}\medskip
\end{minipage}
\hfill
\begin{minipage}[b]{.45\linewidth}
  \centering
  \centerline{\includegraphics[width=4.8cm]{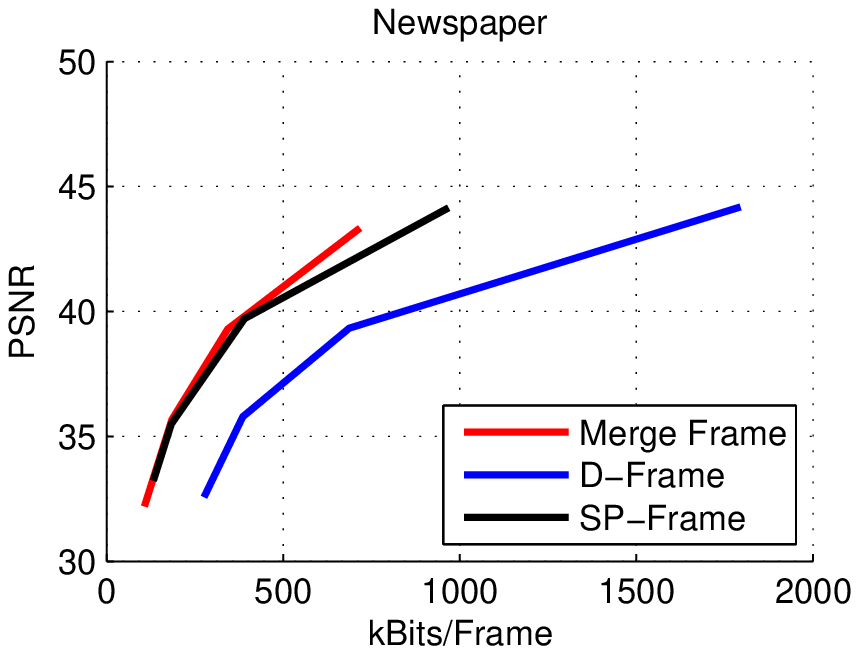}}
  \centerline{(d) \texttt{Newspaper}}\medskip
\end{minipage}
\vspace{-0.15in}
\caption{PSNR versus encoding rate comparing proposed RD-optimized M-frame with D-frame and SP-frame for sequences for sequences \texttt{Balloons}, \texttt{Kendo}, \texttt{Lovebird1} and \texttt{Newspaper} in average case.}
\label{fig:S3AV}
\end{figure}

\begin{table*}[htb]
	\begin{center}
		\renewcommand{\arraystretch}{1.2}
		\renewcommand{\multirowsetup}{\centering}
		\caption{BD-rate reduction of RD-optimized M-frame compared to D-frame and SP-frame of scenario 3.}\label{tab:S3}
		\begin{tabular}{|c|c|c|c|c|}
			\hline
			\multirow{2}{*}{Sequence Name}	&   \multicolumn{2}{c|}{M-frame \emph{vs.} D-frame}	&	\multicolumn{2}{c|}{M-frame \emph{vs.} SP-frame}	\\  \cline{2-5}
							&	Average Case	&	Worst Case	&	Average Case	&	Worst Case	\\ \hline
			Balloons		&	-55.1\%			&	-53.0\%		&	-19.2\%			&	-35.0\%		\\	\hline
			Kendo			&	-53.8\%			&	-53.6\%		&	-19.3\%			&	-36.4\%		\\	\hline
			Lovebird1		&	-57.5\%			&	-58.7\%		&	-11.3\%			&	-28.7\%		\\	\hline
			Newspaper		&	-51.6\%			&	-50.4\%		&	-5.0\%			&	-12.9\%		\\	\hline
		\end{tabular}
	\end{center}
\end{table*}

\begin{figure}[htb]
\begin{minipage}[b]{.45\linewidth}
  \centering
  \centerline{\includegraphics[width=4.8cm]{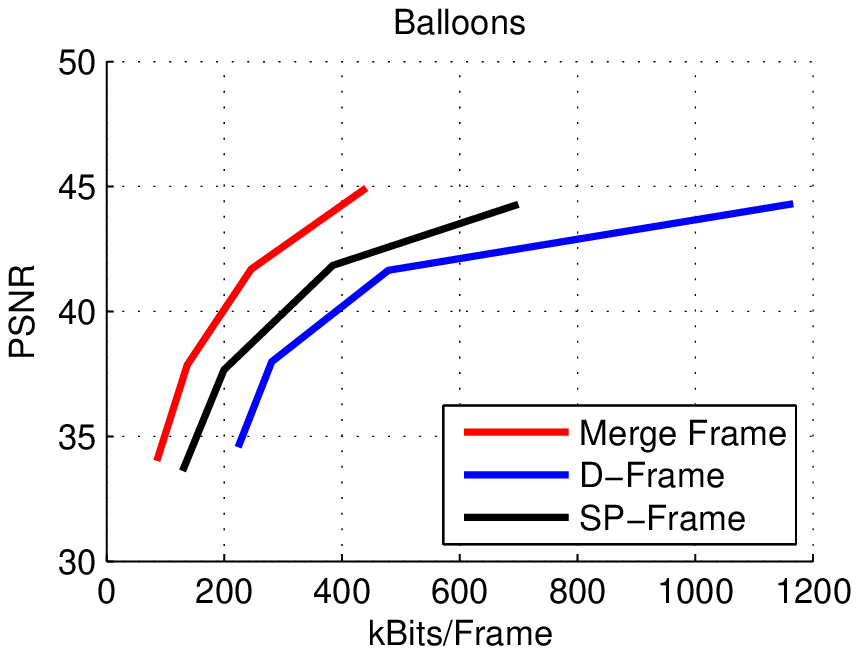}}
  \centerline{(a) \texttt{Balloons}}\medskip
\end{minipage}
\hfill
\begin{minipage}[b]{.45\linewidth}
  \centering
  \centerline{\includegraphics[width=4.8cm]{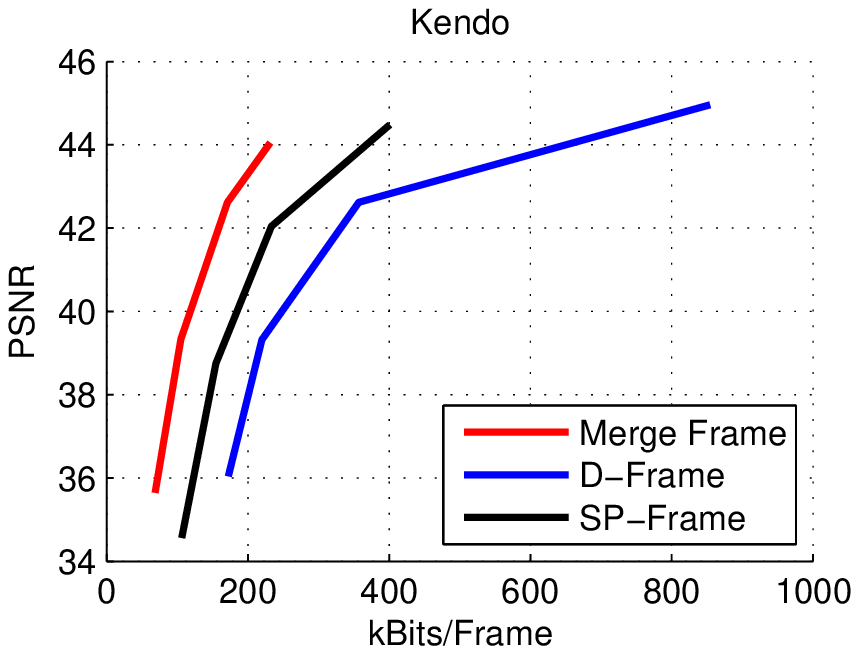}}
  \centerline{(b) \texttt{Kendo}}\medskip
\end{minipage}
\begin{minipage}[b]{.45\linewidth}
  \centering
  \centerline{\includegraphics[width=4.8cm]{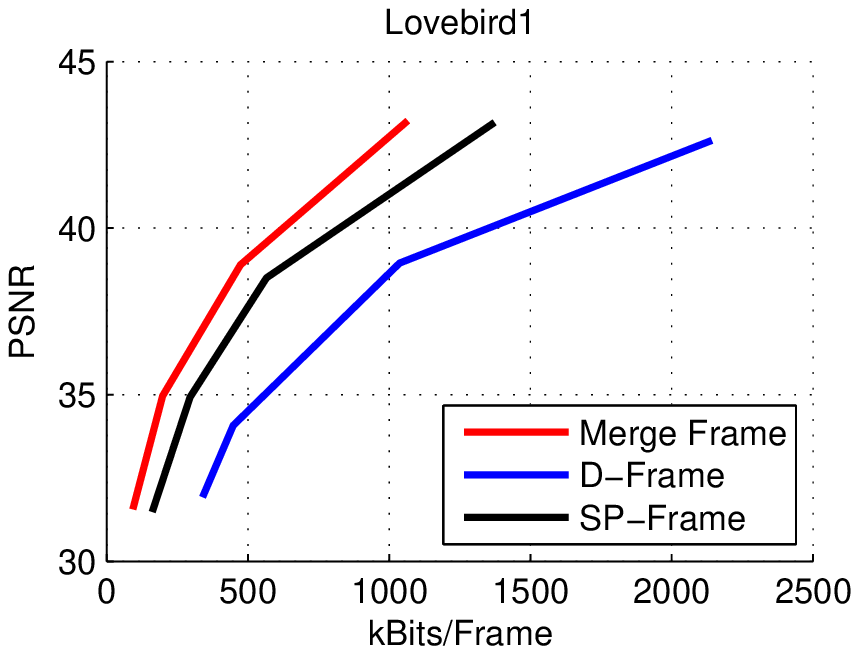}}
  \centerline{(a) \texttt{Lovebird1}}\medskip
\end{minipage}
\hfill
\begin{minipage}[b]{.45\linewidth}
  \centering
  \centerline{\includegraphics[width=4.8cm]{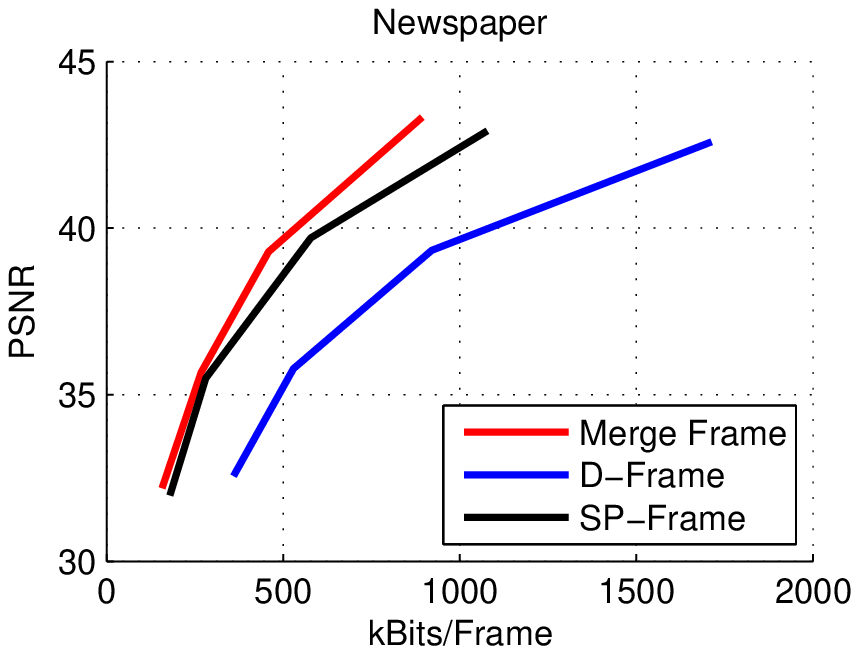}}
  \centerline{(b) \texttt{Newspaper}}\medskip
\end{minipage}
\vspace{-0.15in}
\caption{PSNR versus encoding rate comparing proposed M-frame with D-frame and SP-frame for sequences for sequences \texttt{Balloons}, \texttt{Kendo}, \texttt{Lovebird1} and \texttt{Newspaper} in worst case.}
\label{fig:S3WS}
\end{figure}


The coding results of dynamic view switching for average case and worst case are shown in Fig.~\ref{fig:S3AV} and \ref{fig:S3WS} respectively. BD-rate comparison for average case and worst case can be found in Table~\ref{tab:S3}. From Table~\ref{tab:S3} we observe that our proposed RD-optimized M-frame achieves 57.5\% BD-rate reduction compared to D-frame and 19.3\% BD-rate reduction compared to SP-frame. From Table~\ref{tab:S3} we observe that our proposed RD-optimized M-frame achieves 58.7\% BD-rate reduction compared to D-frame and 36.4\% BD-rate reduction compared to SP-frame.

\section{Conclusion}
\label{sec:conclude}

In this paper, we propose a new merging operator---piecewise constant (PWC) function---for merging different reconstructed versions of a target frame to a unique one---to enable stream switching while preserving coding efficiency. Specifically, in order to merge $k$-th transform coefficients of different side information (SI) frames to the same value, we encode appropriate step sizes and horizontal shift parameters of a \texttt{floor} function, so that all the SI coefficients fall on the same function step. We propose two methods to select \texttt{floor} function parameters for signal merging. In the first method, we selected parameters so that coefficients are merged identically to a pre-determined target value. In the second method, the merged target value can be RD-optimized to induce better coding performance. Experimental results show that for both cases, our proposed merge frame has significant coding gain over an implementation of DSC frame and H.264 SP-frames with a reduction in decoder complexity.





\bibliographystyle{IEEEtran}
\bibliography{ref,ref2}
\end{document}